\newcommand{\ignore}[1]{}
\newtheorem{theorem}{Theorem}
\newtheorem{lemma}[theorem]{Lemma}
\newtheorem{proposition}[theorem]{Proposition}
\newtheorem{definition}[theorem]{Definition}
\newtheorem{remark}[theorem]{Remark}
\newtheorem{example}[theorem]{Example}
\DeclareMathOperator{\card}{Card}
\begin{document}

\title{Location games with references}
\author{
Ga\"etan Fournier\thanks{Aix Marseille Univ, CNRS, AMSE, Marseille, France, gaetan.fournier@univ-amu.fr. The project leading to this publication has received funding from the french government under the “France 2030” investment plan managed by the French National Research Agency (reference: ANR-17-EURE-0020) and from Excellence Initiative of Aix-Marseille University - A*MIDEX. "}~~ \& ~~ Amaury Francou\thanks{This research was initiated during the student research internship in AMSE of Amaury Francou, amauryfrancou@gmail.com}}

\maketitle

\begin{abstract}
We study a class of location games where players want to attract as many resources as possible and pay a cost when deviating from an exogenous reference location. This class of games includes political competitions between policy-interested parties and firms' costly horizontal differentiation. We provide a complete analysis of the duopoly competition: depending on the reference locations, we observe a unique equilibrium with or without differentiation, or no equilibrium. We extend the analysis to a competition between an arbitrary number of players and we show that there exists at most one equilibrium which has a strong property: only the two most-left and most-right players deviate from their reference locations. 
\end{abstract}

\noindent
\emph{JEL Classification}: C72, D43, L13, R30.\\

\noindent
\emph{Keywords}: Location games, Spatial competition, Spatial voting theory, Costly product differentiation. 

\section{Introduction}
Spatial competitions arise when sellers maximize their clientele by competing on the space of customers' preferences or when political parties maximize their electorate by competing in the space of political opinions, as introduced in the seminal works of \citep{hotelling1990stability} and \citep{downs1957economic}. In the most general version of these models, we can ignore the exact objective of the agents and simply assume that they seek to maximize a resource through spatial competition. 

However, economic agents rarely have complete freedom when choosing their locations: there is typically a reference situation, described by the characteristics or the past choices of the agents or their predecessors. These references are not necessarily controlled by the agents but have a significant impact on the outcome of the competition and can not be ignored.

A first example one can think of is the situation where vendors choose the location of their store. While the Hotelling-Downs model analyzes a competition between sellers entering a new market, it is sometimes more realistic to assume that they already have a historical address and that a move is costly. The price of the move has no impact on the consumers, whose economic behaviors depend only on the new locations, however it has a strategic impact on the sellers' decisions. 

The general situation we study is the following: a finite number of players compete for uniformly distributed resources on $[0,1]$. Each player $i$ has an exogenous reference $r_i$, selects a location $x_i$ and pays a cost that increases with the distance between $x_i$ and $r_i$. Resources are attracted by the player with the closest location.
 The model, formally detailed in section \ref{se:model}, can be applied in numerous fields of research. As an illustration we now mention two motivating examples.

\subsection{Motivating examples}\label{subse:2examples}
\textbf{\underline{Example 1:} Political competition with policy-motivations.\\} 
In the classical Downs model, political parties only have \textit{office motivations}: they simply consider a policy as an opportunistic instrument, to be successful in the election. In other words, they would say anything to seduce the electorate. Therefore, they compete on the policy space by selecting platforms $x_i \in [0,1]$ in order to maximize their electoral success. It is sometimes more realistic to suppose that political parties also have \textit{policy motivations}, for example when they do not want to betray their sincere favorite policy $r_i$. In this situation, parties have to compromise between two objectives: maximizing their vote shares\footnote{The vote share might also represent the probability to win a winner-takes-all election because voters have incomplete information about political parties, see \citep{lindbeck1987balanced} for example, or because parties have incomplete information about the distribution of voters' preferences \citep{patty2002equivalence}.} and minimizing the distance between their sincere opinions $r_i$ and the selected platforms $x_i$. Following \citep{roemer2009political}, each motivation represents the objective of a faction of political parties, the \textit{opportunists} and the \textit{militants}: while the \textit{opportunists} "desire to maximize the probability of the party's victory", the \textit{militants} "desire to propose a policy as close as possible to the party's ideal point and have little interest in winning the election".\\
~~\\
\textbf{\underline{Example 2:} Costly products' differentiation.\\} Consider a market where firms want to maximize their clientele by selling a product whose price is exogenously determined. This model applies for example to newsstands, pharmacies or franchises of different types of services and products. It also applies to the media markets or to any sector where consumers are not charged with direct prices but through advertising.

We suppose that the good has a $1$-dimensional characteristic and that consumers have symmetric single peaked preferences over this characteristic, with a maximum uniformly distributed on the interval, so that consumers choose to buy one unit of the product whose characteristic is the closest to their ideal. Each firm $i$ has the exogenous expertise to produce a good with characteristic $r_i$ and has the possibility to produce a good with characteristic $x_i \neq r_i$ for a certain cost which increases with the distance $d(r_i,x_i)$. The literature mentions this investment as the "differentiation costs" (see for example \citep{eaton1994flexible}). When the characteristic is geographic, it may be interpreted as the cost of transporting inputs bought from suppliers for example, otherwise it might be interpreted as resulting from the process of modifying the standard product of the firm, through research and development investments for example. The payoff of firms is again composed of two terms: the first is the quantity of product they sell, which is a good proxy for their profits as the price is fixed and because we suppose that the production costs are small enough to give the firms the incentives to sell as much as possible. The second term represents the cost of the differentiation.\\
~~\\
The literature review, in subsection \ref{subse:lit_review}, comes back to the above examples and illustrates our contributions to the literature. We also briefly mention other applications of our model.

\subsection{Main results}

We first study the duopoly competition. We provide a complete characterization of the equilibrium for any pair of exogenous references $(r_1,r_2) \in [0,1]^2$. An interesting question tackled by the literature is to explain in which context do we observe differentiation or not: the principle of minimal differentiation states that in many different spatial competitions we observe similar equilibrium locations for both players.
Although this result is quite robust\footnote{The principle still holds in the case of non-uniform distribution of consumers, in the case of sequential entry of the players, in a winner takes all competition, or when there is uncertainty about the preferences if players share the same prior.}, the principle has been challenged by empirical evidences and theoretically by the introduction of price competition (\citet{d1979hotelling}) or turnout in voting models (\citep{davis1970expository}) for example. In the current paper, we prove that an equilibrium with or without differentiation can be observed depending on the pair of references $(r_1,r_2)$. As stated formally in Proposition \ref{prop:2players}, there are three different scenarios:
\begin{enumerate}
\item If references are far from each other, there exists a unique equilibrium where players differentiate.
\item If references are close to each other but far from $\frac{1}{2}$, there is no equilibrium.
\item If references are both close to each other and close to $\frac{1}{2}$, there exists a unique equilibrium where players do not differentiate $(x_1^*=x_2^*=\frac12)$.
\end{enumerate}

The intuition is the following: each player is subject to two possibly conflicting forces, on the one hand each player wants to move towards his opponent as such a move always increases the quantity of attracted resources, but on the other hand they does not want to move too far from their reference location. When the references are far from each other, each of the two players can act as a local monopolist and find the optimal compromise between the two forces. However, when the references are close to each other, the first order conditions can not be satisfied without passing each other positions. If references are close to $\frac12$, this median location is chosen by both players at equilibrium and the minimal differentiation principle is verified: the unique equilibrium is $(\frac12,\frac12)$. Surprisingly, equilibrium locations are not necessarily located in the interval between the two references. Finally, if the references are close to each other but far from $\frac12$, the situation is unstable: players still want to be as close as possible to each other, but if they select the same location $x \neq \frac{1}{2}$ then each of them would benefit from an infinitesimal deviation towards $\frac12$ as he would attract more than half the resources.\\

Our results quantitatively depend on the relative importance of the cost of deviations from the references, which is represented by a parameter $c>0$. We provide comparative statics to show that the larger $c$ is, the more often we observe a differentiated equilibrium and the less often we observe the equilibrium $(\frac12,\frac12)$. All together, we show that the likeliness equilibrium existence is a non-monotonic function of the costs: first decreasing then increasing with $c$. We explain this phenomenon and illustrate graphically the different possible equilibria on the $2$-dimensional space of $(r_1,r_2) \in [0,1]^2$. Finally, we investigate the heterogeneous case where players have different parameters $c_1 \neq c_2$.\\

We then turn to the competition with more than two players. Among oligopolies, the triopoly has a particular interest as this configuration is singular in the literature of standard location games: it is the only configuration where there exists no equilibrium. We prove that this non-existence result is robust against the introduction of a small cost of deviation. We provide an explicit upper bound for this robustness. If the costs increase above this threshold, we prove that there exist three possible configurations depending on the triplet of references: no equilibrium, a unique equilibrium where all players differentiate or a unique equilibrium where two players do not differentiate.\\
~~\\
We then consider the general case of $n$ players\footnote{The general case is the one of $n \geq 5$ players. The less interesting case of four players slightly differs and is briefly studied aside.}. We prove that there exists either a unique or no equilibrium depending on the reference locations. This result differs from the spatial competition without references where there exists, for $n>5$, a continuum of equilibria (see for example \cite{eaton1975principle} and notice that the set is not very tractable as it involves $n^2$ equations). The introduction of the reference locations surprisingly simplifies the equilibrium structure as it creates a tie-breaking rule. We characterize the unique possible equilibrium and show a strong property: at most four players select locations which are different from their references, the two players with the most left or right references. A key argument for this property is very intuitive: when a player changes his location within the same interval between his neighbors, he gains some resources on one side but looses the same quantity on the other side, so only the cost minimization matters. We provide necessary and sufficient conditions under which the unique equilibrium candidate is indeed an equilibrium, and we illustrate our result in the simple case where references are regularly located on the unit interval.\\

Finally, we implement a simple algorithm that computes the equilibrium candidate for any references vector, and determine whether it is or not an equilibrium. To make it possible, we show that despite the continuous action space $[0,1]^n$, there only exist a finite number of possible best responses for each player. 

\subsection{Literature review}\label{subse:lit_review}
 
In political economy, many papers are concerned with spatial competition. While the original paper of \citep{downs1957economic} only considers office-motivated candidates, \citep{wittman1973parties} considers a competition between two policy-interested candidates and both papers gave birth to different literatures.\footnote{A few papers tackle the question of the candidates' motivations from an empirical point of view, see for example \citep{fredriksson2011politicians}.} Between these two extremes, some papers analyze the case of parties with hybrid motivations, for example \citep{callander2008political}, \citep{saporiti2008existence} or \citep{DROUVELIS201486}. The latter claims that \textit{"although politicians might be more interested in winning the elections, it seems reasonable to expect that policy considerations will also enter into the candidate's payoff function with some weight"}. The main difference with our paper is that the policy motivation of candidates is measured by the distance between the ideal policy of the parties and the policy implemented after the election. In the political interpretation of our model, we do not consider a winner-takes-all election and therefore, we do not specify which policy is implemented after the election. On the other hand, we consider that players are political parties that have to compromise between maximizing vote shares and minimizing the distance between the preferred policy and the platform proposed by the party.\\
In this sense we refer to \citep{roemer2009political}: the policy-interested candidates introduced in the above papers are defined as the \textit{reformists} while our policy-interested candidates are defined as the \textit{militants}. When the reformists minimize the distance between their ideal and the implemented policy, the militants minimize the distance between their ideal and the policy proposed by the party.\footnote{Roemer also mention that "political histories are replete with descriptions of these three kind of party activists (the third kind being the opportunists that are office-interested). For instance, Schorske (1993) calls them, when describing the German Social Democratic Party: the party bureaucrats, the trade union leadership and the radicals".} The solution concept studied by Roemer differs from ours: his "party-unanimity Nash equilibrium" (PUNE) is an equilibrium in the game where a deviation is profitable by the party only when it is unanimously preferred by all factions of the party. The author provides some existence results and gives a Nash bargaining interpretation of the solution concept. \citep{kartik2007signaling} also consider a competition where some candidate are \textit{militants} in the sense that they do not want to propose a platform they do not \textit{believe in}. In their model, such candidates are preferred by voters and office-motivated candidates use mixed strategies to pretend they also care about politics.\\

We now discuss papers concerned with the use of location games with references in industrial organization. Our contribution in this field is to consider differentiation costs paid by firms in order to produce a good which has a different characteristic than the standard good. \citep{kishihara2020product} consider a costly product re-positioning performed by two firms having predetermined product characteristic base positions. The two firms compete on both locations and prices, but the authors can only provide analytic results for the equilibrium outcomes, as closed-form solutions can not be given. \citep{correia2011costly} suppose that firms pay a quadratic cost to differentiate their products from the standard product which is fixed to $\frac{1}{2}$ for all agents. Firms compete on both location and price, and the authors found that there is either a differentiated equilibrium or no equilibrium at all. Although we do not take into account the impact of locations on prices, we extend this analysis to the case where the standard products are not necessarily fixed at $\frac12$ and can in particular be different. We also relax the restriction on the number of firms. We do not believe that our analysis could keep this generality in a location-then-price setting. Notice that \citep{eaton1994flexible} also suppose that firms produce goods which may be more or less costly to produce, according to whether their specifications are distant from or close to the specifications of a basic product. In a survey, \citep{brenner2010location} mentions that \textit{"a series of influential papers have abstracted away from the choice of the price, and looked at whether location equilibria exist, and how they can be characterized when the location has no impact on the pricing"}. We refer to this survey for the literature concerned with fixed price location games and to \citep{loertscher2011sequential} for a broader discussion on markets where the hypothesis of fixed price is particularly relevant. 
\\
We briefly mention papers concerned with other applications of our model. When a social planner prefers the firms to produce a particular good for exogenous reasons, firms are penalized for the distance between the good they produce and a subsidized good. In this case, firms want to compromise between taking strategic opportunities to increase their market shares and maximizing the subsidies by following the recommendation of the planner. In \citep{lambertini1997optimal}, the author examines a duopoly in which he considers that firms can be either taxed or subsidized depending on the choice of the characteristics, which influence the equilibrium locations. Finally, let us mention that in the literature of strategic advertising, when firms produce a good with a certain characteristics but pretend, through advertisements, to produce a different good in order to attract naive consumers, it is usually penalized by a cost function. Such a cost typically increase with the difference between the two products. For example, \citep{tremblay2002advertising} considers a duopoly competition where the firms produce an identical product located at $\frac12$, but can generate a subjective product differentiation through costly advertising.

\section{Model and notations}\label{se:model}

We now formally define the location game in its normal form. A finite number $n$ of players simultaneously select locations $\textbf{x}=(x_1,\dots,x_n)$ in the action space $[0,1]^n$. In order to explicit the players' payoffs, we need to define $q_i(\textbf{x})$, the quantity of resources that player $i$ attracts when the profile of locations $\textbf{x}=(x_1,\dots,x_n)$ is played. 
It is the quantity of resources $t$ that are closer to $x_i$ than to any other other $x_j$. In the case where several players select the same location, they split their resources equally. Only a negligible set of resources are at the same distance from two different locations. Formally:
$$q_i(x_1,\dots,x_n):=\displaystyle\frac{\mu(\{ t \in [0,1],~|t-x_i| \leq |t-x_j| \text{ for every } j \neq i \})}{Card(\{ x_j | x_j = x_i \})}$$
where $\mu$ is the Lebesgue measure (simply used to measure the lengths of intervals).\\

Players are endowed with exogenous references $\textbf{r}=(r_1,\dots,r_n)$ and each player $i$ incurs as cost when choosing a location $x_i$ which is different from his reference location $r_i$. This cost is given by $\gamma(|x_i-r_i|)$ where $\gamma: [0,1] \rightarrow \mathbb{R}$ is a continuous, strictly increasing and strictly convex function\footnote{The convexity of the cost with respect to the distance is a common assumption: see for example \citep{d1979hotelling}, where the author considers a the quadratic cost functions to investigate the differentiation in a location-then-price models and where linear costs induce technical difficulties or alternatively \citep{villani2009optimal} for the use of quadratic cost of distance in optimal transport.} such that $\gamma(0)=0$. We sometimes assume that $\gamma(d)=cd^2$ where $c>0$, in order to obtain explicit results and to provide a comparative static with respect to the parameter $c$. However, we show that the main results hold without this simplifying assumption.

Player $i$'s payoff is then given by:
$$g_i(\textbf{x},\textbf{r})= q_i(\textbf{x})-\gamma(|x_i-r_i|)$$

It is useful to define the \textit{left (resp. right) neighbor} of player $i$ as the player that selects the largest (smallest) location  which is strictly smaller (larger) than $x_i$, if any, and the \textit{neighborhood} of player $i$ as the interval $(x_i^{\ell},x_i^{r})$ where $x_i^{\ell}$ is the left neighbor of player $i$ if it exists and the boundary $0$ otherwise and $x_i^{r}$ is the right neighbor of player $i$ if it exists and the boundary $1$ otherwise. Finally, we say that a player is a \textit{left (resp. right) peripheral player} when he has no left (resp. right) neighbor.

\section{The duopoly}\label{se:duopoly}

In this section, we suppose for simplicity that $\gamma(d)=cd^2$ with $c>0$, except in subsection \ref{subse:general_costs} where we generalize the results.

\subsection{Existence and characterization of equilibrium}
\begin{proposition}\label{prop:2players}{\textbf{The duopoly competition}\\}
Let $\textbf{r}=(r_1,r_2) \in [0,1]^2$ and suppose, without loss of generality, that $r_1 \leq r_2$.
\begin{itemize}
\item If $r_2-\frac{1}{4c} \leq r_1+\frac{1}{4c}$, the unique possible equilibrium is $(\frac{1}{2},\frac{1}{2})$.\\~~\\ It is an equilibrium if and only if $\frac12 \in [r_2-\frac{1}{4c},r_1+\frac{1}{4c}]$.
\item If $r_1+\frac{1}{4c}<r_2-\frac{1}{4c}$, the unique possible equilibrium is $(r_1+\frac{1}{4c},r_2-\frac{1}{4c})$.\\~~\\ It is an equilibrium if and only if $\begin{cases} c(r_2-r_1)^2+r_1+r_2-1-\frac{1}{4c} \geq 0~~~~~~(E1)\\ c(r_2-r_1)^2-(r_1+r_2)+1-\frac{1}{4c} \geq 0~~~~(E2) \end{cases}$
\end{itemize}
\end{proposition}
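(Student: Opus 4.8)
The plan is to reduce the two-dimensional best-response problem to a one-dimensional analysis of each player's payoff as a function of his own location, exploiting that $q_i$ is piecewise linear. Fix the opponent's location $y$. For player $i$ choosing $x_i < y$ we have $q_i = \frac{x_i + y}{2}$, for $x_i > y$ we have $q_i = 1 - \frac{x_i + y}{2}$, and for $x_i = y$ we have $q_i = \frac12$. Hence the payoff $g_i$ is continuous except possibly at the tie $x_i = y$, and on each side it is a strictly concave quadratic in $x_i$ (leading coefficient $-c$). On the left branch the unconstrained maximiser is $x_i = r_i + \frac{1}{4c}$ and on the right branch it is $x_i = r_i - \frac{1}{4c}$. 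These two values, together with the tie point $x_i = y$, are the only candidates for a best response; comparing them is the whole content of the argument.

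First I would treat the differentiated equilibria. If $x_1 < x_2$, the first-order conditions on the relevant branches give $x_1 = r_1 + \frac{1}{4c}$ and $x_2 = r_2 - \frac{1}{4c}$, which is consistent with $x_1 < x_2$ exactly when $r_1 + \frac{1}{4c} < r_2 - \frac{1}{4c}$, i.e. in the second case. The reversed order $x_1 > x_2$ is impossible: it would force $x_1 = r_1 - \frac{1}{4c}$ and $x_2 = r_2 + \frac{1}{4c}$, hence $r_1 - r_2 > \frac{1}{2c} > 0$, contradicting $r_1 \le r_2$. This already pins down the unique differentiated candidate and shows it can only arise in the second case; moreover in that case one checks $0 < x_1 < x_2 < 1$, so the boundary of $[0,1]$ plays no role.

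Next I would handle agglomeration $x_1 = x_2 = x$. If $x \neq \frac12$, say $x < \frac12$, then either player moving infinitesimally to the right jumps from the split value $\frac12$ to nearly $1 - x > \frac12$ at negligible extra cost, a profitable deviation; symmetrically for $x > \frac12$. Hence $\frac12$ is the only agglomeration candidate, which proves the uniqueness claim in the first case. To decide when $(\frac12,\frac12)$ is an equilibrium, I would impose that $x_i = \frac12$ maximises each branch: for player $1$ the binding requirement is that the left-branch optimum lies weakly to the right of $\frac12$, i.e. $\frac12 \le r_1 + \frac{1}{4c}$, and for player $2$ that the right-branch optimum lies weakly to the left of $\frac12$, i.e. $r_2 - \frac{1}{4c} \le \frac12$. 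The two remaining inequalities ($r_2 \ge \frac12 - \frac{1}{4c}$ and $r_1 \le \frac12 + \frac{1}{4c}$) are then automatic from $r_1 \le r_2$, giving exactly $\frac12 \in [r_2 - \frac{1}{4c}, r_1 + \frac{1}{4c}]$.

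Finally, for the second case I would verify that the candidate $(r_1 + \frac{1}{4c}, r_2 - \frac{1}{4c})$ is an equilibrium precisely under $(E1)$ and $(E2)$. Local deviations on the own branch are excluded by concavity and the first-order conditions, so the only threat is a leapfrog. For player $1$ the best deviation across $x_2^\ast$ is obtained as the supremum of the right branch, attained in the limit $x_1 \to (x_2^\ast)^+$ with value $1 - x_2^\ast - c(x_2^\ast - r_1)^2$; requiring the candidate payoff $\frac{r_1 + r_2}{2} - \frac{1}{16c}$ to dominate it simplifies exactly to $(E1)$, and the symmetric computation for player $2$ yields $(E2)$. The main obstacle, and the step deserving the most care, is the discontinuity of $q_i$ at a tie: one must check that landing exactly on the opponent is never a strictly better deviation than the leapfrog limit already controlled by $(E1)$–$(E2)$. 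This follows by comparing the tie value $\frac12 - c(x_2^\ast - r_1)^2$ with both the left-branch maximum and the leapfrog supremum, the sign of $x_2^\ast - \frac12$ determining which of the two already dominates it.
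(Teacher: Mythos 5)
Your proposal is correct and follows essentially the same route as the paper: reduce to one-dimensional piecewise-concave quadratic branches, pin down the differentiated candidate via first-order conditions, rule out agglomeration at $x\neq\frac12$ by the undercutting deviation, and characterize equilibrium existence through the branch-maximization conditions in the paired case and the leapfrog limits (which simplify to $(E1)$--$(E2)$) in the differentiated case. The only cosmetic differences are that the paper delegates sortedness ($x_1\leq x_2$) and the paired-at-$\frac12$ property to its general Lemma \ref{prop:3claims}, and handles the tie deviation by writing the tie payoff as the average of the two one-sided limits rather than by your sign-of-$x_2^{\ast}-\frac12$ comparison; both are equivalent.
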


The proof is postponed to subsection \ref{prof:prop2players} and we now illustrate and comment the above proposition.\\


\underline{Illustration and intuition:} The figure below describes the equilibrium structure with respect to the references $(r_1,r_2) \in [0,1]^2$. 


\begin{figure}[H]
\centering
\includegraphics[scale=0.2]{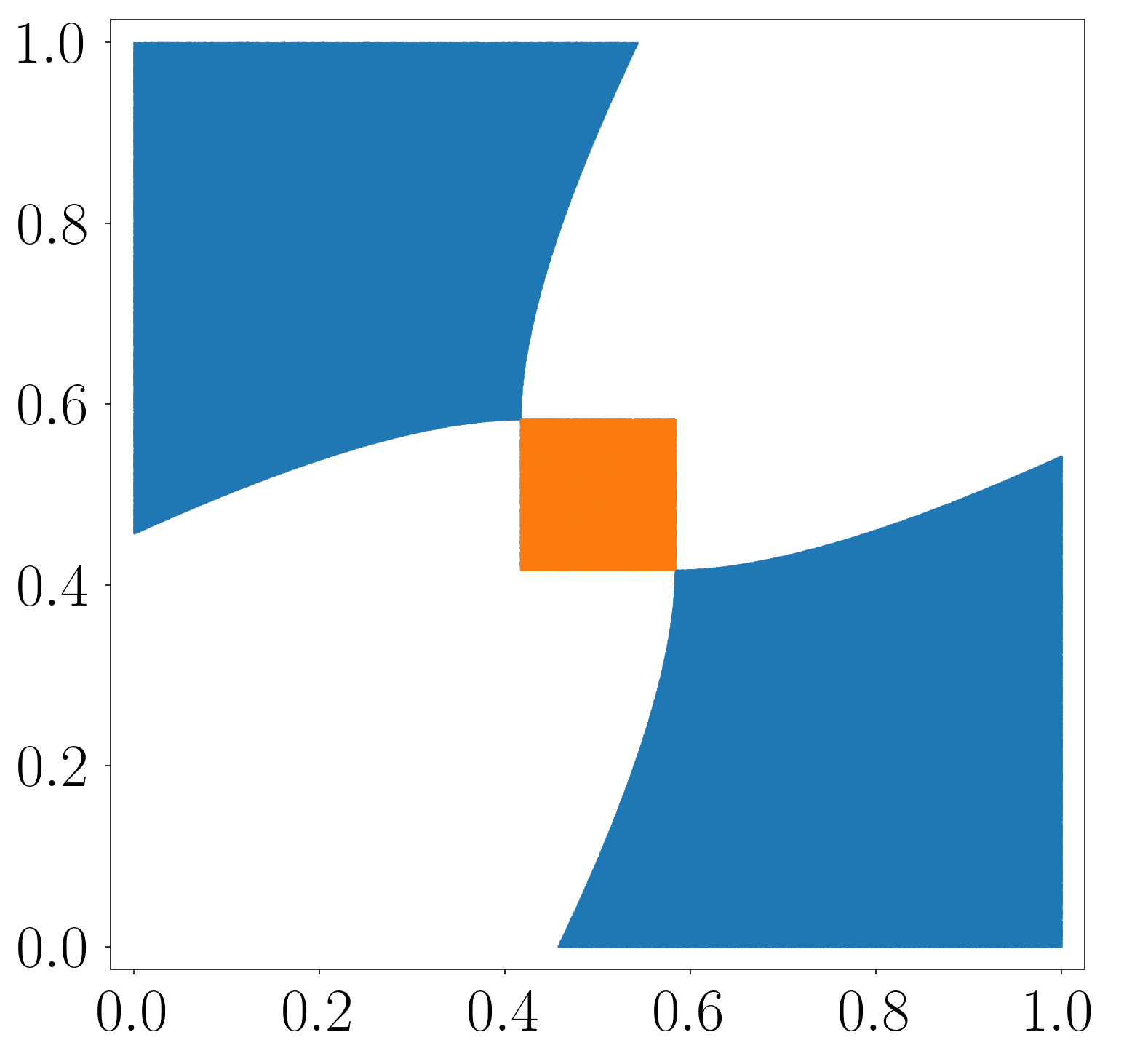}
\caption{The different equilibrium configurations with respect to $(r_1,r_2)$, for $c=3$. In white there is no equilibrium, in orange $(\frac12,\frac12)$ is the unique equilibrium, in blue $(r_1+\frac{1}{4c},r_2-\frac{1}{4c})$ is the unique equilibrium.}
\label{fi:duopoly}
\end{figure}

As discussed in the introduction, we observe $3$ possible scenarios:
\begin{enumerate}
\item In the orange square, references are such that the game admits a unique equilibrium which is $(\frac{1}{2},\frac{1}{2})$ . As stated in Proposition \ref{prop:2players}, this scenario happens when references are both close to each other: $|r_2-r_1| \leq \frac{1}{2c}$ and close to the center: $|r_2-\frac{1}{2}|  \leq \frac{1}{4c}$ and $|\frac{1}{2}-r_1| \leq \frac{1}{4c}$.\\
~~\\
The proof of Proposition \ref{prop:2players} combines two ingredients in this scenario. The first ingredient is concerned with how players compromise between maximizing resources and minimizing costs. On the one hand, when moving closer to his opponent, the marginal gain of resources is constant and equal to $\frac{1}{2}$, as the midpoint between the player and his neighbor moves two times slower than the player. On the other hand the marginal cost of a deviation further the references is increasing because the cost function $\gamma(d)=cd^2$ is convex. Therefore, we find that there exists an optimal deviation from the reference, which is given by a move of $\frac{1}{4c}$ towards the opponent (later named $\delta$ in the general case of a convex function $\gamma$), as long as they don't pass each other's positions. If the distance between the two references is closer than $\frac{1}{2c}$, then at equilibrium players necessarily select the same location, otherwise at least one player has a profitable deviation by moving closer to his opponent. The second ingredient is formalized in claim 2 of Lemma \ref{prop:3claims}: $(\frac{1}{2},\frac{1}{2})$ is the only possible equilibrium where players select the same location. Otherwise, each player has a profitable deviation by moving marginally towards the center. Obviously, if references are to far from $\frac{1}{2}$, the profile $(\frac12,\frac12)$ is not an equilibrium. Because the constraints are stronger when $c$ increases, this first scenario is less likely as the cost of deviating from the references becomes higher. 

In this configuration, a counter-intuitive phenomenon can happen: it might be that players select equilibrium locations that are strictly larger than both references (or smaller). Take for example $c=3$ and $r=(0.42,0.45)$, then the equilibrium is $(\frac12,\frac12)$. 

\item In the two symmetric blue areas, the game admits a unique equilibrium which is $(r_1+\frac{1}{4c},r_2-\frac{1}{4c})$. In this case, references are far enough from each other so that each agent acts as a local monopolist and deviates from his reference towards his opponent by $\frac{1}{4c}$.\\
~~\\
In this configuration, players do not have a profitable deviation within the same neighborhood, that is on the same side of the opponent. Therefore, the proof of Proposition \ref{prop:2players} focuses on the cases where players have a profitable deviation by relocating on the other side of their opponent.

Consider the situation where $r_1<x_1<x_2<r_2$. Player $1$'s payoff is decreasing with $x_1$ on the whole interval $(x_2,1]$ as it decreases the resources he attracts and increases his costs. Therefore, the equilibrium conditions are limited to one equation for Player $1$ not to deviate marginally on the right of Player $2$\footnote{We prove that $\displaystyle\lim_{\epsilon \downarrow 0} u_1(x_2+\epsilon,x_2) \leq u_1(x_1,x_2)$ and symmetrically for Player $2$.} and one other condition for Player $2$ not to deviate marginally to the left of Player $1$. These equations are denoted $(E1)$ and $(E2)$ in Proposition \ref{prop:2players}. For instance, $(E1)$ can be written $ c(r_2-r_2)^2 + (r_1+r_2) \geq 1+\frac{1}{4c}$, it states that it should not be the case that $r_1$ and $r_2$ are both close to each other and small: indeed, it this case the left player has a profitable deviation by deviating to the right of his opponent. 

\item In the two white areas, the game does not admit any equilibrium. The reason is a repetition of the arguments above: the references are close to each other but far from $\frac{1}{2}$ or because the references are very asymmetric (both close to $0$ or to $1$) so that a player has a profitable deviation over his opponent's location. Note that this situation never occurs when references are exactly symmetric $(r_1=1-r_2)$.
\end{enumerate}
The above figure is symmetric with respect to the linear transformation $(x',y')=(y,x)$ as players are anonymous and $(x',y')=(1-y,1-x)$ as the unit interval with uniform density is symmetric.\\
~~\\
\underline{Comments on the two illustrative examples mentioned in subsection \ref{subse:2examples}.}\\ The equilibrium $(\frac{1}{2},\frac{1}{2})$ is the unique equilibrium in the game without references. In the field of political economics, it illustrates the \textit{median voter theorem}: at equilibrium both political parties select the median voter's opinion (here $\frac{1}{2}$ as voters are uniformly distributed). 
When applied to this example, Proposition \ref{prop:2players} explicitly narrows the domain of validity of the median voter theorem. It shows how political parties might differentiate when they have to compromise between \textit{office motivations} and \textit{policy interests}.\\
~~\\
In the second motivating example described in subsection \ref{subse:2examples}, the principle of minimal differentiation implies the production of two products with the same characteristic. \citep{d1979hotelling} challenges this principle by introducing a different trade off between transportation costs and price differentiation. It introduces the principle of maximal differentiation that states on the contrary that two competitors will select opposite side of the space they are competing in (in our case $0$ and $1$) in order to avoid a high price competition. In our setting the price is fixed, we explain the differentiation with only spatial arguments, the reference locations. Proposition \ref{prop:2players} shows that this ingredient is more subtle: depending on the references, we may or not observe differentiation. Note that equilibria are inefficient as the total clientele is fixed but players always pay a cost because they select $x_i \neq r_i$ at the duopoly equilibrium. However, the inefficiency does not necessarily increases with $c$, consider the case of a differentiated equilibrium where the penalty paid by each player is $c\left(\frac{1}{4c}\right)^2=\frac{1}{16c}$ which decreases with $c>0$.

\subsection{Comparative statics}\label{subse:comparative_statics}

The trade-off between strategic opportunities and costs minimization relies on the magnitude of the cost function. When $\gamma(d)=cd^2$, the parameter $c$ captures the importance of the references locations for players so we now analyze how the equilibrium configurations varies with $c \in (0,+\infty)$.


\begin{remark}\label{rmq:interpret_c} For a better interpretation of the numerical values taken by the parameter $c$, we can consider the case of a differentiated equilibrium where players
select locations at a distance $\delta=\frac{1}{4c}$ from their references. In the first motivating example, when supposing that $c=1$, we implicitly assume that the compromise found by a political party between its policy interested members and its office motivated members it that the party should deviate at a distance equal $\frac{1}{4}$ from its reference towards a strategic opportunity, so $25\%$ of the total political spectrum size. Taking $c=2.5$ leads to a compromise of a $10 \%$ deviation.
\end{remark}
The figure below describes the equilibrium configuration with respect to $(r_1,r_2) \in [0,1]^2$ for increasing values of $c$.

\begin{figure}[H] 
\begin{center}\includegraphics[scale=0.25]{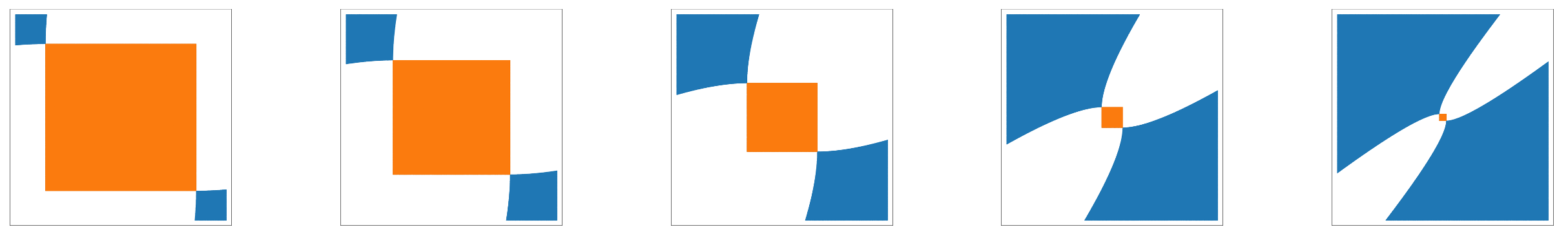}
\end{center}
\caption{The equilibrium configurations with respect to $(r_1,r_2) \in [0,1]^2$, for $c \in \{0.7,0.9,1.5,5,15\}$. In white there is no equilibrium, in orange $(\frac12,\frac12)$ is the unique equilibrium, in blue $(r_1+\frac{1}{4c},r_2-\frac{1}{4c})$ is the unique equilibrium.}
\label{fi:comparative_statics}
\end{figure}

We observe (and prove in Proposition \ref{prop:2players_otherway} below) that when $c$ goes to $0$, $(\frac12,\frac12)$ is the unique equilibrium for every profile $(r_1,r_2) \in [0,1]^2$, as it is when $c=0$. On the other hand, when $c$ goes to infinity, $(r_1+\frac{1}{4c},r_2-\frac{1}{4c})$ is the unique equilibrium for every profile $(r_1,r_2)$ such that $r_1 \neq r_2$.\\
~~\\
These two extreme cases indicates that the probability of existence of equilibrium, when the profile of references is uniformly drawn on the square $[0,1]^2$, is a non-monotonic function of $c$.\footnote{We compute the likeliness of an equilibrium in a game where each reference would be drawn at random at the beginning of the game, as it is the case for the favorite policies of candidates with character in \citep{kartik2007signaling} for example.} The probability we compute is equal to the area of the blue and orange domains together.

\begin{proposition}\label{prop:theta} Suppose that $\textbf{r}=(r_1,r_2)$ is uniformly drawn on the square $[0,1]^2$.
The game admits an equilibrium with probability 
$$\mathbb{P}(c)=\min\left(1,1-\frac{(2+4c)^{\frac{3}{2}}-6c-5}{6c^2}\right)$$
There exists $\theta \simeq 1.321$ such that $\mathbb{P}(c)$ is constantly equal to 1 for $c \leq \frac12$, decreasing for $c \in [\frac12,\theta]$ and increasing for $c \geq \theta$.
\end{proposition}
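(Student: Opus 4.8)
The plan is to compute $\mathbb{P}(c)$ directly as the area of the subset of $[0,1]^2$ on which Proposition~\ref{prop:2players} guarantees an equilibrium, and then to analyse that area as a function of $c$. Since players are anonymous, the existence region is symmetric across the diagonal $r_1=r_2$, so I would work on the half $\{r_1\le r_2\}$ and double. The convenient coordinates are $s=r_1+r_2$ and $d=r_2-r_1\ge 0$, under which $[0,1]^2\cap\{r_1\le r_2\}$ becomes $\{(s,d):0\le d\le 1,\ d\le s\le 2-d\}$ with area element $\tfrac12\,ds\,dd$. Rewriting the two cases of Proposition~\ref{prop:2players} in these variables, an equilibrium exists exactly when $s\in[\,1-\tfrac{1}{2c}+d,\ 1+\tfrac{1}{2c}-d\,]$ in the regime $d\le\tfrac{1}{2c}$, and when $s\in[\,1+\tfrac{1}{4c}-cd^2,\ 1-\tfrac{1}{4c}+cd^2\,]$ in the regime $d>\tfrac{1}{2c}$. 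Each interval is nonempty precisely on its own regime, and both collapse to the single point $s=1$ at $d=\tfrac{1}{2c}$, so the two descriptions glue consistently.

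First I would dispose of the regime $c\le\tfrac12$: there $\tfrac{1}{2c}\ge 1\ge d$ everywhere, so we are always in the first case, and the inequalities $1-\tfrac{1}{2c}+d\le d$ and $1+\tfrac{1}{2c}-d\ge 2-d$ show that the equilibrium interval swallows the whole admissible range $[d,2-d]$. Hence an equilibrium exists for every $(r_1,r_2)$ and $\mathbb{P}(c)=1$, which is what forces the $\min$ with $1$ in the stated formula.

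For $c>\tfrac12$ I would compute the complementary no-equilibrium area and subtract from $1$. The decisive geometric fact is that the parabolic boundaries $s=1+\tfrac{1}{4c}-cd^2$ and $s=1-\tfrac{1}{4c}+cd^2$ meet the square boundaries $s=d$ and $s=2-d$ at the common abscissa $d^\ast=\tfrac{1}{2c}\big(\sqrt{2+4c}-1\big)$, the positive root of $cd^2+d-1-\tfrac{1}{4c}=0$, and that $\tfrac{1}{2c}\le d^\ast\le 1$ for all $c\ge\tfrac12$. Consequently the no-equilibrium set consists of two symmetric strips (below the lower boundary and above the upper one), present only for $d\le d^\ast$: for $d\in[0,\tfrac{1}{2c}]$ each strip has constant $s$-length $1-\tfrac{1}{2c}$, while for $d\in[\tfrac{1}{2c},d^\ast]$ each has $s$-length $1+\tfrac{1}{4c}-cd^2-d$; for $d>d^\ast$ the equilibrium interval already covers all of $[d,2-d]$, so nothing is lost. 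Integrating these two contributions against $\tfrac12\,ds\,dd$, doubling, and simplifying — writing $w=\sqrt{2+4c}$ makes the cubic terms assemble into $(2+4c)^{3/2}$ — yields the no-equilibrium area $\tfrac{(2+4c)^{3/2}-6c-5}{6c^2}$, and hence the claimed $\mathbb{P}(c)$.

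For the monotonicity, set $f(c)=\tfrac{(2+4c)^{3/2}-6c-5}{6c^2}$, so $\mathbb{P}=1-f$ on $c\ge\tfrac12$. Differentiating $f=g/h$ with $g=(2+4c)^{3/2}-6c-5$ and $h=6c^2$, and substituting $c=\tfrac{w^2-2}{4}$, a short computation gives $g'h-gh'=-3c\,P(w)$ with $P(w)=w^3-3w^2+6w-14$, so $f'(c)=-P(w)/(12c^3)$ and the sign of $f'$ is the opposite of the sign of $P(w)$. Since $P'(w)=3\big((w-1)^2+1\big)>0$, the polynomial $P$ is strictly increasing, hence has a unique real root $w_\theta$; from $P(2)=-6<0$ we get $w_\theta>2$, i.e. a unique $\theta>\tfrac12$ via $\theta=\tfrac{w_\theta^2-2}{4}$, numerically $\theta\simeq1.321$. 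Thus $P(w)<0$ and $f'>0$ on $(\tfrac12,\theta)$, so $\mathbb{P}$ is decreasing, while $P(w)>0$ and $f'<0$ on $(\theta,\infty)$, so $\mathbb{P}$ is increasing, completing the argument. I expect the main obstacle to be the bookkeeping in the area computation — correctly locating $d^\ast$ and identifying which of the four candidate boundaries is active on each $d$-interval — after which both the closed-form integration and the sign analysis of $P$ are routine.
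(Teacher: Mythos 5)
Your proposal is correct and follows essentially the same route as the paper: both pass to diagonal coordinates, compute the no-equilibrium area cut out by the parabolic boundaries $s = 1 \pm\left(cd^2-\tfrac{1}{4c}\right)$ together with the constant-width strips in the regime $d\le\tfrac{1}{2c}$, and then obtain the monotonicity statement from the sign of an auxiliary strictly increasing function with a unique root. Your substitution $w=\sqrt{2+4c}$, which turns the paper's sign function $\sqrt{4c+2}\,(c+2)-(3c+5)$ into the cubic $P(w)=w^3-3w^2+6w-14$ (indeed $P(w)=4\left[\sqrt{4c+2}\,(c+2)-(3c+5)\right]$), is only a mild algebraic streamlining of the same argument, replacing the paper's convexity analysis of $f'$ by the observation $P'(w)=3\left((w-1)^2+1\right)>0$.
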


The proof of this proposition is given in \ref{proof:prop_theta}. We prove there that $\theta$ is a unique real solution of a third degree equation.\\
~~\\
\underline{Illustration:} In the following figure, the function $\mathbb{P}(c)$ is drawn in blue. Simulations using the algorithm described in subsection \ref{se:algo} confirm the theory with red dots.
\begin{figure}[H] 
\begin{center}
\includegraphics[scale=0.18]{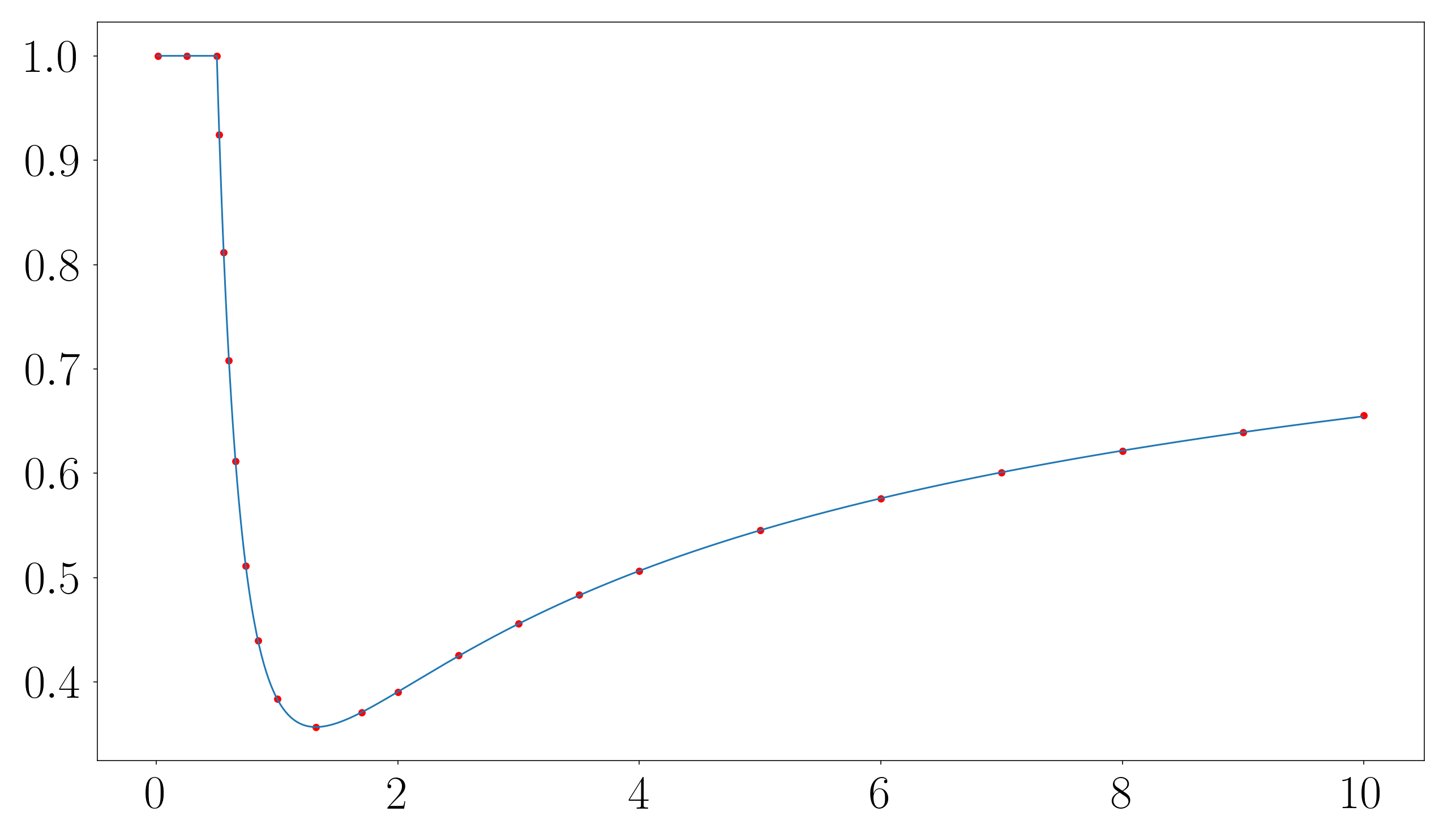}
\caption{Probability of existence of an equilibrium if $(r_1,r_2) \sim U([0,1]^2)$ with respect to $c \in [0,10]$}
\label{fi:proba_c}
\end{center}
\end{figure}
\underline{Interpretation:}
\begin{itemize}
\item For $c \leq \frac{1}{2}$, the first condition in Proposition \ref{prop:2players} is always satisfied: $$c \leq \frac{1}{2} \Rightarrow r_2-\frac{1}{4c}\leq r_1+\frac{1}{4c} \text{ for every } r_1 \leq r_2 \in [0,1]^2$$ Therefore, $(\frac12,\frac12)$ is the unique equilibrium for any references $(r_1,r_2)$. The equilibrium is the same as in the case $c=0$ because the cost function is too small to have any impact: the marginal cost of deviating is always smaller than $\frac12$, which is the marginal gain of a strategic opportunity.
\item For $c \in [\frac{1}{2},\theta]$, the probability that an equilibrium exists decreases with $c$. In this interval, the likeliness of the undifferentiated equilibrium decreases, the likeliness of the differentiated equilibrium increases and the first trend is stronger than the second one. We find that $\theta \simeq 1.321$: according to Remark \ref{rmq:interpret_c}, this value is relatively small, as it represents players that would opportunistically agree to deviate by a distance of $\frac{1}{4\theta}\simeq 0.189$ from their references, in the unit interval.
 
\item For $c \geq \theta$, the probability that an equilibrium exists increases with $c$. In this interval, the undifferentiated equilibrium has almost disappeared, so the increase of the likeliness of the differentiated equilibrium makes the probability of existence larger.
\end{itemize}

\begin{proposition}
Suppose that $\textbf{r}=(r_1,r_2)$ is uniformly drawn in the unit square. The differentiated equilibrium $(r_1+\delta,r_2-\delta)$ is more likely than the undifferentiated equilibrium $(\frac{1}{2},\frac{1}{2})$ if and only if $c \geq \eta \simeq 1.17692$.
\end{proposition}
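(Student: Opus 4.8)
The plan is to turn both probabilities into areas on the square $[0,1]^2$ and to exploit Proposition~\ref{prop:theta}, which already furnishes their sum. Writing $P_u(c)$ and $P_d(c)$ for the probabilities of the undifferentiated equilibrium $(\tfrac12,\tfrac12)$ and of the differentiated equilibrium $(r_1+\delta,r_2-\delta)$, these two events are mutually exclusive and cover all cases where an equilibrium exists, so $P_u(c)+P_d(c)=\mathbb{P}(c)$. Hence the inequality to be established, $P_d(c)>P_u(c)$, is equivalent to $\mathbb{P}(c)>2P_u(c)$, and it suffices to compute the single area $P_u(c)$.

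First I would determine $P_u(c)$. Dropping the ordering convention in Proposition~\ref{prop:2players}, the profile $(\tfrac12,\tfrac12)$ is an equilibrium exactly when $\min(r_1,r_2)\ge \tfrac12-\tfrac1{4c}$ and $\max(r_1,r_2)\le\tfrac12+\tfrac1{4c}$, i.e.\ when both coordinates lie in $[\tfrac12-\tfrac1{4c},\tfrac12+\tfrac1{4c}]$; note these two conditions already force $|r_2-r_1|\le\tfrac1{2c}$, so the "unique possible equilibrium" hypothesis is automatically subsumed. This region is a square of side $\tfrac1{2c}$ centred at $(\tfrac12,\tfrac12)$. For $c\ge\tfrac12$ it is contained in $[0,1]^2$, giving $P_u(c)=\tfrac1{4c^2}$. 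For $c<\tfrac12$ the constraint is vacuous, so $(\tfrac12,\tfrac12)$ is the equilibrium for every reference pair, whence $P_u=1>P_d=0$ and the claimed inequality fails; this is consistent with $\eta>\tfrac12$, and I would dispose of the range $c<\tfrac12$ at once.

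It then remains to analyse, for $c\ge\tfrac12$, the scalar inequality $\mathbb{P}(c)>\tfrac1{2c^2}$. Substituting the closed form of Proposition~\ref{prop:theta} (which for $c\ge\tfrac12$ takes the non-trivial branch, since $\mathbb{P}(\tfrac12)=1$) and clearing the denominator $6c^2$, the boundary equation $\mathbb{P}(c)=\tfrac1{2c^2}$ simplifies to $6c^2+6c+2=(2+4c)^{3/2}$. Both sides are positive for $c>0$, so squaring is an equivalence and produces the quartic
\[
9c^4+2c^3-9c^2-6c-1=0 .
\]
By Descartes' rule of signs (one sign change in $+,+,-,-,-$) this quartic has exactly one positive real root, which I would locate numerically at $\eta\simeq1.17692$; since the left side of the quartic is negative at $c=\tfrac12$, that root satisfies $\eta>\tfrac12$. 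To fix the direction of the inequality I would use the boundary values of $g(c):=\mathbb{P}(c)-\tfrac1{2c^2}$, namely $g(\tfrac12)=1-2=-1<0$ and $g(c)\to 1>0$ as $c\to\infty$: because $g$ vanishes only at $\eta$ on $[\tfrac12,\infty)$, it is negative on $[\tfrac12,\eta)$ and positive on $(\eta,\infty)$, giving $P_d(c)>P_u(c)\iff c>\eta$. The main obstacle is precisely this last uniqueness/sign step, i.e.\ guaranteeing a single crossing so that the statement is a genuine "if and only if"; the Descartes argument together with the two boundary evaluations settles it, while the area bookkeeping (the clipping at $c<\tfrac12$ and the use of Proposition~\ref{prop:theta} to avoid integrating over the two blue regions directly) is routine.
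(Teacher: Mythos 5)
Your proposal is correct and follows essentially the same route as the paper: compute $P_u(c)=\frac{1}{4c^2}$ from Proposition \ref{prop:2players}, obtain $P_d(c)=\mathbb{P}(c)-\frac{1}{4c^2}$ from Proposition \ref{prop:theta}, and reduce the comparison to the quartic $36c^4+8c^3-36c^2-24c-4=0$ (your $9c^4+2c^3-9c^2-6c-1=0$ is the same equation divided by $4$), whose unique positive root is $\eta$. Your added details (the clipping at $c<\frac12$, Descartes' rule for uniqueness of the positive root, and the boundary-sign argument fixing the direction of the inequality) merely make rigorous what the paper dismisses as "somehow trivial."
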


The proof of this results is somehow trivial as Proposition \ref{prop:2players} immediately implies that the probability that the unique equilibrium is $(\frac{1}{2},\frac{1}{2})$ is $(\frac{1}{2c})^2=\frac{1}{4c^2}$. On the other hand we easily compute the probability of the differentiated equilibrium: $\mathbb{P}(c)-\frac{1}{4c^2}$ and the proposition follows. To obtain the algebraic form of $\eta$, one needs to find the positive solution to the equation $36c^4+8c^3-36c^2-24c-4=0$. Here we only provide the numerical approximation. Again, according to Remark \ref{rmq:interpret_c}, the numerical value of $\eta$ is relatively small, as it represents players that would opportunistically agree to deviate by a distance of $\frac{1}{4\eta}\simeq 0.212$ from their references, in the unit interval.\\
~~\\
In the current paper, we choose to analyze the equilibrium configuration as a function of the pair $(r_1,r_2)$ and for a fixed parameter $c$. The following proposition rephrases Proposition \ref{prop:2players} when the pair $(r_1,r_2)$ is fixed but the parameter $c$ varies.

\begin{proposition}\label{prop:2players_otherway}
Let $(r_1,r_2) \in [0,1]^2$. There exists $c_0>0$ such that for every $c \in [0,c_0]$ the game admits $(\frac12,\frac12)$ as the unique equilibrium of the game.\\
If $r_1 \neq r_2$, there also exists $c_1$ such that for every $c \in [c_1,+\infty[$ the game admits $(r_1+\delta,r_2-\delta)$ as the unique equilibrium of the game.
\end{proposition}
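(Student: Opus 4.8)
The plan is to read both statements directly off Proposition \ref{prop:2players} by letting $c$ tend to its two extremes, so that no fresh game-theoretic analysis is needed. Throughout I assume without loss of generality $r_1 \leq r_2$ and write $\delta = \frac{1}{4c}$.

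First I would handle the small-cost regime. The hypothesis of the first bullet of Proposition \ref{prop:2players}, namely $r_2 - \frac{1}{4c} \leq r_1 + \frac{1}{4c}$, is equivalent to $r_2 - r_1 \leq \frac{1}{2c}$. Since $r_2 - r_1 \leq 1$, this holds whenever $\frac{1}{2c} \geq 1$, i.e. $c \leq \frac12$, and for such $c$ the unique possible equilibrium is $(\frac12,\frac12)$. It then remains to certify that it truly is an equilibrium, i.e. that $\frac12 \in [r_2 - \frac{1}{4c}, r_1 + \frac{1}{4c}]$; this amounts to $\frac{1}{4c} \geq \max(r_2 - \frac12,\, \frac12 - r_1)$, and as both quantities are at most $\frac12$ (because $r_1,r_2 \in [0,1]$), the bound $\frac{1}{4c} \geq \frac12$, again $c \leq \frac12$, suffices. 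Hence $c_0 = \frac12$ works for every reference pair, consistent with the earlier observation that $(\frac12,\frac12)$ is the unique equilibrium for all $(r_1,r_2)$ when $c \leq \frac12$.

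Second, for the large-cost regime I would assume $r_1 < r_2$ and set $a = r_2 - r_1 > 0$, now invoking the second bullet of Proposition \ref{prop:2players}. Its case hypothesis $r_1 + \frac{1}{4c} < r_2 - \frac{1}{4c}$ is equivalent to $c > \frac{1}{2a}$, which holds for all large $c$; the unique possible equilibrium is then $(r_1+\delta, r_2-\delta)$. It remains to check (E1) and (E2). Both have the shape $c\,a^2 - \frac{1}{4c} + (\text{term bounded in } [-1,1]) \geq 0$: indeed (E1) reads $c a^2 + (r_1+r_2-1) - \frac{1}{4c} \geq 0$ and (E2) reads $c a^2 + (1-r_1-r_2) - \frac{1}{4c} \geq 0$. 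Since $a>0$, the leading term $c a^2$ diverges while the remaining terms stay bounded, so both left-hand sides tend to $+\infty$ as $c \to \infty$ and are nonnegative past some finite threshold. Choosing $c_1$ larger than $\frac{1}{2a}$ and than those two thresholds gives the claim.

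The hard part here is essentially nonexistent: the argument is just a limiting reading of Proposition \ref{prop:2players}. The only points needing care are (i) confirming the case-distinction inequality of the relevant bullet before quoting its equilibrium condition, and (ii) the degenerate subcase $r_1 = r_2$ in the first part, where the first bullet applies for every $c$ but the equilibrium condition reduces to $|r_1 - \frac12| \leq \frac{1}{4c}$, still forcing $c$ small --- which is precisely what $c \leq \frac12$ delivers. An explicit $c_1$ could be obtained by solving the two quadratics in $c$ arising from (E1) and (E2), but since the statement only asserts existence, the divergence argument above is enough.
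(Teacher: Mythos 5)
Your proposal is correct and follows essentially the same route as the paper: the paper's own (very terse) proof likewise observes that for $c$ small enough $r_2-\frac{1}{4c}\leq\frac12\leq r_1+\frac{1}{4c}$ always holds, and that for $c$ large enough with $r_1\neq r_2$ the case condition together with $(E1)$ and $(E2)$ holds, then quotes Proposition~\ref{prop:2players}. Your version merely makes the thresholds explicit ($c_0=\frac12$, and $c_1$ from the divergence of $c(r_2-r_1)^2$), which is a slight sharpening but not a different argument.
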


The previous proposition is illustrated in figure \ref{fi:comparative_statics} where we can see that for any pair of references, the equilibrium is undifferentiated for $c$ small enough, and differentiated for $c$ large enough (and $r_1 \neq r_2$). The proof is straightforward: for $c$ small enough it is always the case that $r_2-\frac{1}{4c} \leq \frac12 \leq r_1+\frac{1}{4c}$ so Proposition \ref{prof:prop2players} states that there exists a unique equilibrium which is $(\frac12,\frac12)$. On the other hand, if $c$ is large enough and $r_1 \neq r_2$, take without loss of generality $r_1<r_2$, then is it always the case that $r_1+\frac{1}{4c}<r_2-\frac{1}{4c}$ and both inequalities $(E1)$ and $(E2)$ hold, so Proposition \ref{prof:prop2players} provides the existence and uniqueness of the differentiated equilibrium.\\
~~\\
This last proposition opens a mechanism design perspective: if the planner can decide which parameter $c>0$ to implement, for example in the taxation setting mentioned in the introduction, she basically decides whether there is an equilibrium or not, and whether the equilibrium is differentiated or not. Moreover, because the distance between players at a differentiated equilibrium is given by $d(x_2,x_1)=r_2-r_1-\frac{1}{2c}$, she can also decide the distance between firms in the interval $[0,r_2-r_1)$.

\subsection{General cost function $\gamma$}\label{subse:general_costs}

Proposition \ref{prop:2players} can be generalized to the case where the cost function is given by a strictly increasing and strictly convex function $\gamma: [0,1] \rightarrow \mathbb{R}$. In this case, the following parameter $\delta$ plays an important role.

\begin{definition}\label{def:delta} We denote $\delta$ the unique distance in $[0,1]$, if it exists, such that $\gamma'(\delta)=\frac{1}{2}$. If $\gamma'(d) < \frac{1}{2}$ for every $d \in [0,1]$ we set $\delta:= +\infty$ and if $\gamma'(d) > \frac{1}{2}$ for every $d \in [0,1]$ we set $\delta:= 0$.
\end{definition}

The uniqueness of $\delta$ is a consequence of the strict convexity of $\gamma$. As noted in the discussion of proposition \ref{prop:2players}, the marginal gain when moving closer to an opponent is constant equal to $\frac12$, while the marginal cost of a deviation further the reference is increasing as the cost function $\gamma$ is convex. The parameter $\delta$ captures the turning point where the second becomes larger than the first, therefore $\delta$ is the maximal distance that a player beneficially deviates from its reference in the same neighborhood. Note that in the particular case where $\gamma(d)=cd^2$ then $\delta=\frac{1}{4c}$.

\begin{proposition}\label{prop:2p_generalcost}
Let $\textbf{r}=(r_1,r_2) \in [0,1]^2$. Suppose, without loss of generality, that $r_1 \leq r_2$. Then we have:
\begin{itemize}
\item If $r_2-\delta \leq r_1+\delta$, the unique possible equilibrium is $(\frac{1}{2},\frac{1}{2})$.\\~~\\ It is an equilibrium if and only if $\frac12 \in [r_2-\delta,r_1+\delta]$.
\item If $r_1+\delta<r_2-\delta$, the unique possible equilibrium is $(r_1+\delta,r_2-\delta)$.\\~~\\ It is an equilibrium if and only if $\begin{cases} \gamma(r_2-r_1-\delta)-\gamma(\delta)+\frac{r_1+3r_2}{2}-1-\delta \geq 0~~~~(E1')\\ \gamma(r_2-r_1-\delta)-\gamma(\delta)-\frac{3r_1+r_2}{2}+1-\delta \geq 0~~~~(E2') \end{cases}$
\end{itemize}
\end{proposition}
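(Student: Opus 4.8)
The plan is to mirror the argument behind Proposition \ref{prop:2players}, replacing the explicit optimal deviation $\frac{1}{4c}$ by the parameter $\delta$ of Definition \ref{def:delta} and keeping the cost function abstract. First I would reduce the set of equilibrium candidates exactly as in the quadratic case. Writing $g_1(x_1,x_2)=\frac{x_1+x_2}{2}-\gamma(|x_1-r_1|)$ for the left player when $x_1<x_2$ (and symmetrically $g_2=1-\frac{x_1+x_2}{2}-\gamma(|x_2-r_2|)$ for the right player), I note that the resource term is linear and $x_1\mapsto\gamma(|x_1-r_1|)$ is convex, so $g_1(\cdot,x_2)$ is concave on the half-line where player $1$ stays left of his opponent. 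Its unconstrained maximiser solves the first-order condition $\gamma'(x_1-r_1)=\frac12$, i.e. $x_1=r_1+\delta$; symmetrically the right player's within-neighbourhood optimum is $x_2=r_2-\delta$. This identifies the unique differentiated candidate $(r_1+\delta,r_2-\delta)$, which is a legitimate ordering only when $r_1+\delta<r_2-\delta$.

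For the case $r_2-\delta\le r_1+\delta$, I would argue that no differentiated profile can be an equilibrium: if $x_1<x_2$ then either the concave best responses force $x_1=r_1+\delta$ and $x_2=r_2-\delta$, which violates the ordering, or one player's within-neighbourhood payoff is strictly increasing up to his opponent and no best response is attained. Hence the only surviving candidate is a same-location profile, and by claim 2 of Lemma \ref{prop:3claims} this must be $(\frac12,\frac12)$. To decide whether $(\frac12,\frac12)$ is an equilibrium, I would note that from this profile a player can only gain by relocating towards his own reference, reaching his within-neighbourhood optimum at distance $\delta$; such a move is beneficial precisely when his reference lies strictly further than $\delta$ from $\frac12$. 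Thus $(\frac12,\frac12)$ is an equilibrium if and only if $|\frac12-r_1|\le\delta$ and $|\frac12-r_2|\le\delta$, which, since $r_1\le r_2$, is equivalent to $\frac12\in[r_2-\delta,r_1+\delta]$.

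For the case $r_1+\delta<r_2-\delta$, concavity already rules out any profitable within-neighbourhood deviation from $(r_1+\delta,r_2-\delta)$, so the only deviations left to check are those crossing the opponent. Here I would show that once player $1$ relocates strictly to the right of player $2$, both his resource term $1-\frac{x_1+x_2}{2}$ and his cost $\gamma(x_1-r_1)$ move against him, so $g_1$ is strictly decreasing on $(x_2,1]$ and the supremum of the crossing deviations is the one-sided limit $\lim_{\epsilon\downarrow0}g_1(x_2+\epsilon,x_2)=1-(r_2-\delta)-\gamma(r_2-r_1-\delta)$. Comparing this with the equilibrium payoff $\frac{r_1+r_2}{2}-\gamma(\delta)$ yields exactly $(E1')$, and the symmetric computation for player $2$ crossing to the left yields $(E2')$. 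Uniqueness in this case follows because the interval $[r_2-\delta,r_1+\delta]$ is empty, so by the previous paragraph $(\frac12,\frac12)$ is not an equilibrium, leaving $(r_1+\delta,r_2-\delta)$ as the only possibility.

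The main obstacle I anticipate is the crossing-deviation step: one must verify the monotonicity of $g_1$ on the far side carefully (using $x_2=r_2-\delta>r_1+\delta>r_1$, so that moving right genuinely raises the cost) and handle the one-sided limit at the opponent's location, since the resource function $q_1$ is discontinuous where the two locations coincide. Keeping track of signs in reducing the limiting inequality to the stated forms $(E1')$ and $(E2')$ is the only genuinely delicate bookkeeping; the remainder is a transcription of the quadratic argument with the defining relation $\gamma'(\delta)=\frac12$ in place of the explicit value $\delta=\frac{1}{4c}$.
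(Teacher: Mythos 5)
Your proposal is correct and takes essentially the same route as the paper, which proves this proposition by noting it ``strictly follows'' the quadratic-case proof of Proposition \ref{prop:2players}: sortedness and claim 2 of Lemma \ref{prop:3claims} plus concavity/first-order conditions (now $\gamma'(x_1-r_1)=\frac12$ instead of the explicit $\frac{1}{4c}$) isolate the two candidates, one-sided crossing limits yield $(E1')$ and $(E2')$, and the marginal-deviation-towards-the-reference argument characterizes when $(\frac12,\frac12)$ survives. The only detail you leave implicit is the deviation to exactly the opponent's location (pairing), which the paper dispatches in one line by observing that the paired payoff equals the average of the two one-sided limits, both of which you have already shown to be unprofitable.
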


The proof of Proposition \ref{prop:2p_generalcost} strictly follows the proof of Proposition \ref{prop:2players} and the interpretation is unchanged. 
Remark that, although the unique equilibrium locations only depend on $\delta$ (and not on the specific function $\gamma$), the conditions for this candidate to be indeed an equilibrium depends on $\gamma$, as we can see in the inequalities $(E1')$ and $(E2')$. 

\subsection{Heterogeneous costs}\label{subse:hetero_costs}

We show in this subsection that Proposition \ref{prop:2players} can be easily extended to the case where players incur  different cost functions $\gamma_i(d) = c_id^{2}$. We denote $\delta_i=\frac{1}{4c_i}$. 

\begin{proposition}\label{prop:2players_hetero}
Let $r=(r_1,r_2)\in[0,1]^{2}$. Suppose, without loss of generality, that $r_1 \leq r_2$. Then we have :
\begin{itemize}
\item If $r_2-\delta_2 \leq r_1+\delta_1$, the unique possible equilibrium is $(\frac{1}{2},\frac{1}{2})$.\\ It is an equilibrium if and only if $\frac{1}{2}\in[r_2-\delta_2,r_2+\delta_2] \cap [r_1-\delta_1,r_1+\delta_1]$.
\item If $r_1+\delta_1<r_2-\delta_2$, the unique possible equilibrium is $(r_1+\delta_1,r_2-\delta_2)$.\\ It is an equilibrium if and only if:\\ $\begin{cases}c_1(r_2-r_1)^2 - 2c_1\delta_2(r_2-r_1)+\frac{1}{2}(r_1+\delta_1)+\frac{3}{2}(r_2-\delta_2)+c_1(\delta_2^{2}-\delta_1^{2})-1 \geq 0 ~~ (E1'')\\ c_2(r_2-r_1)^2 - 2c_2\delta_1(r_2-r_1)-\frac{1}{2}(r_2-\delta_2)-\frac{3}{2}(r_1+\delta_1)-c_2(\delta_2^{2}-\delta_1^{2})+1 \geq 0 ~~ (E2'')\end{cases}$
\end{itemize}
\end{proposition}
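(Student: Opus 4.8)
The plan is to follow the proof of Proposition \ref{prop:2players} line by line, carrying the two distinct optimal deviations $\delta_1$ and $\delta_2$ through the argument instead of a common $\delta$. Write player $i$'s payoff as $g_i = q_i(\textbf{x}) - c_i(x_i-r_i)^2$. Fix any profile with $x_1<x_2$, so that $q_1=\frac{x_1+x_2}{2}$ and $q_2=1-\frac{x_1+x_2}{2}$. Differentiating in the left player's own coordinate gives $\partial_{x_1}g_1=\frac12-2c_1(x_1-r_1)$, whose unique zero is $x_1=r_1+\delta_1$; symmetrically the right player's interior optimum is $x_2=r_2-\delta_2$. This identifies the unique ``local monopolist'' positions and shows that, staying on its own side of the opponent, player $i$ optimally deviates by exactly $\delta_i$ towards the center. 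Since $\partial^2_{x_i}g_i=-2c_i<0$, each $g_i$ is strictly concave in $x_i$ on a fixed neighborhood, so these are genuine maximizers.

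I would then split on the sign of $(r_1+\delta_1)-(r_2-\delta_2)$, exactly as in the homogeneous case. If $r_2-\delta_2\leq r_1+\delta_1$, a separated profile $x_1<x_2$ cannot place both players at their interior optima (that would force $r_1+\delta_1=x_1<x_2=r_2-\delta_2$, contradicting the hypothesis), so no differentiated equilibrium exists and the only candidate is a tie. Here I invoke the fact established in claim 2 of Lemma \ref{prop:3claims} that the unique possible same-location equilibrium is $(\frac12,\frac12)$; its argument rests purely on the discontinuous jump in $q_i$ when a tie is broken and therefore does not involve the cost parameters, so it transfers verbatim. To obtain the equilibrium condition I check each player's deviation from $\frac12$ separately: moving left, player $i$ maximizes $\frac{x_i+1/2}{2}-c_i(x_i-r_i)^2$ with interior optimum $r_i+\delta_i$, so a leftward move is unprofitable iff $r_i+\delta_i\geq\frac12$; moving right yields interior optimum $r_i-\delta_i$, so a rightward move is unprofitable iff $r_i-\delta_i\leq\frac12$. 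Together these give $\frac12\in[r_i-\delta_i,r_i+\delta_i]$ for each $i$, i.e. the stated intersection condition. This is the first genuinely new feature relative to the homogeneous case, where the two intervals coincide and collapse to the single interval $[r_2-\delta,r_1+\delta]$.

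In the complementary case $r_1+\delta_1<r_2-\delta_2$, the candidate $(r_1+\delta_1,r_2-\delta_2)$ is the unique profile at which both players sit at their (interior) optima, and by the concavity above neither has a profitable deviation that keeps it on its current side. The only remaining deviations are jumps across the opponent, and as in the differentiated case of Proposition \ref{prop:2players} each player's payoff is monotone beyond the opponent, so it suffices to rule out the most tempting jump, namely landing just on the far side of the opponent. Taking the limit as player $1$ approaches $x_2$ from the right, its resource share tends to $1-x_2$ and its cost to $c_1(x_2-r_1)^2$; requiring the equilibrium payoff $\frac{x_1+x_2}{2}-c_1\delta_1^2$ to dominate $1-x_2-c_1(x_2-r_1)^2$ and substituting $x_1=r_1+\delta_1$, $x_2=r_2-\delta_2$ reduces, after expansion, to inequality $(E1'')$. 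The symmetric computation for player $2$ jumping just left of $x_1$ gives $(E2'')$.

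The routine but error-prone part, and the one I expect to be the main obstacle, is the algebra of these two jump conditions: with $\delta_1\neq\delta_2$ the cross terms $2c_i\delta_j(r_2-r_1)$ and the difference $c_i(\delta_2^2-\delta_1^2)$ no longer cancel as neatly as in the homogeneous case, so one must track the asymmetry carefully to land exactly on $(E1'')$ and $(E2'')$. Beyond that, the only conceptual point requiring attention is confirming that the tie-breaking argument of claim 2 is insensitive to heterogeneous costs, which it is, since it rests on a first-order resource jump that swamps the $O(\varepsilon)$ cost change; hence $(\frac12,\frac12)$ remains the unique same-location candidate.
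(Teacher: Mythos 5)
Your proposal is correct and follows essentially the same route as the paper: the paper's own proof simply states that Proposition \ref{prop:2players} generalizes by carrying $\delta_1$ and $\delta_2$ through the argument, with exactly your two modifications (the tie condition becoming the intersection $\frac12\in[r_1-\delta_1,r_1+\delta_1]\cap[r_2-\delta_2,r_2+\delta_2]$, and the jump conditions becoming $(E1'')$ and $(E2'')$). Your expansions of the two jump conditions, $c_1(r_2-r_1-\delta_2)^2$ and $c_2(r_2-r_1-\delta_1)^2$, do land exactly on $(E1'')$ and $(E2'')$, so you have in fact supplied more detail than the paper itself.
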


The proof is postponed to subsection \ref{proof:prop2players_hetero}.\\
~~\\
\underline{Illustration:} the above figures are obtained 
with $c_1=1$ and $c_2=2$ on the left, and with $c_1=1$ and $c_2=10$ on the right.

\begin{figure}[H]
\begin{center}
\includegraphics[scale=0.25]{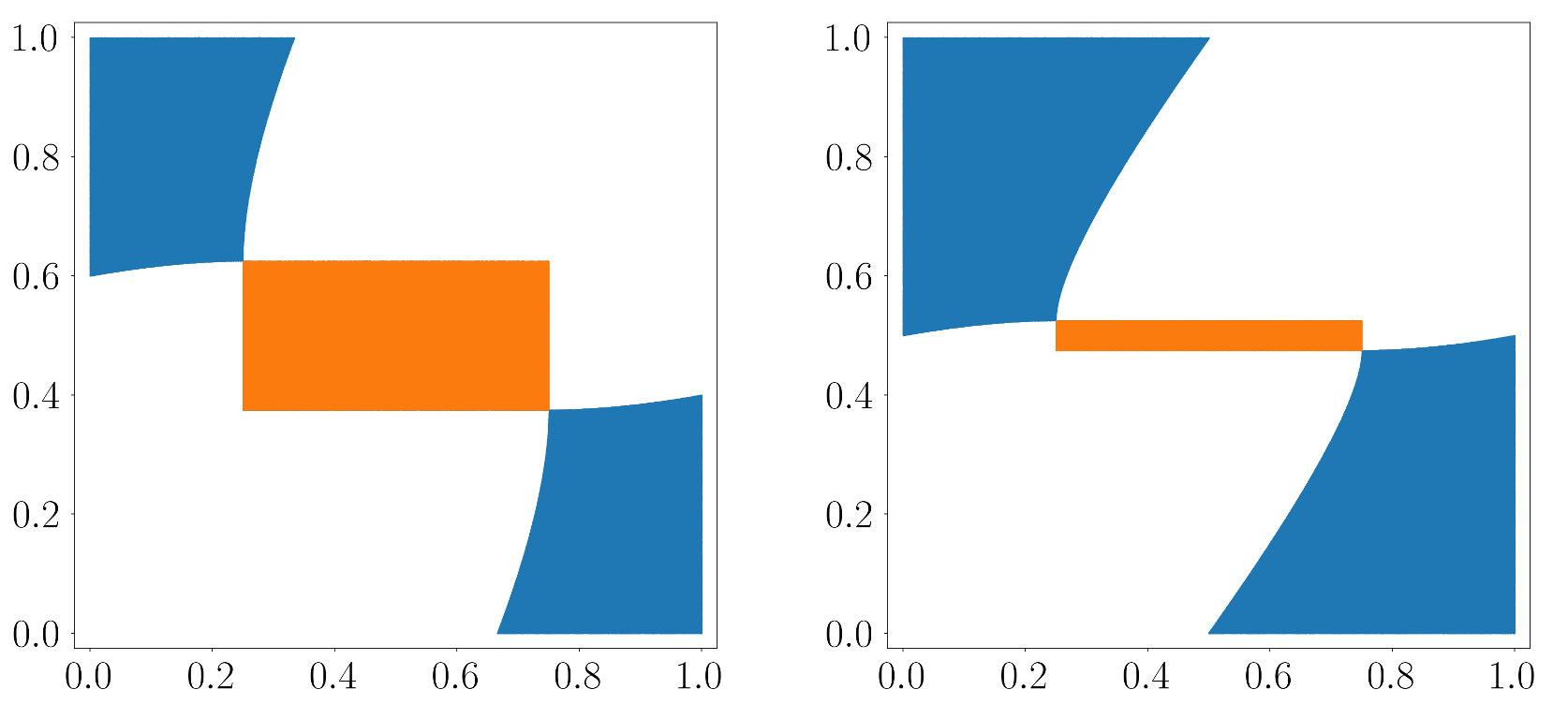}
\end{center}
\caption{The equilibrium structures with asymmetric costs: $(c_1,c_2)=(1,2)$ on the left, and $(1,10)$ on the right. In white there is no equilibrium, in orange $(\frac12,\frac12)$ is the unique equilibrium, in blue $(r_1+\delta_1,r_2-\delta_2)$ is the unique equilibrium.}
\label{fi:hetero}
\end{figure}
\underline{Interpretation:} The structure of equilibria remains unchanged: when references are far from each other there is a differentiated equilibrium, if they are close to each other and close to $\frac{1}{2}$, the unique equilibrium is $(\frac12,\frac12)$. In this second case, the equilibrium conditions are modified and do not define a square anymore but a rectangle. Finally if they are close to each other but far from $\frac12$ there is no equilibrium. \\
The figures are still invariant to the linear transformation $(x',y')=(1-x,1-y)$ as the unit interval with uniform density is symmetric, but the figures are not symmetric with respect to the transformations $(x',y')=(y,x)$ or $(x',y')=(1-y,1-x)$ anymore, as it was the case when costs where homogeneous, because players are not anonymous anymore.

\section{The triopoly}\label{se:triopoly}

\subsection{Existence and characterization of equilibrium}\label{subse:equilibrium_3p}

The case where three players compete on the unit interval is particularly interesting as it is the unique case where there is no pure equilibrium in the game without cost (where $\gamma=0$). Indeed, it is easy to prove that at equilibrium the two following contradictory statements hold:\\ (1) Peripheral players are paired,\\ 
(2) No more than two players share the same location.\\
An immediate consequence is that the triopoly does not admit any equilibrium.\\ 

Different papers 
tried to fill the gap, by introducing mixed equilibrium (\cite{shaked1982existence}) or by relaxing the assumption that customers always buy to the closest shop (\cite{de1987existence}) for example.\\

In this section we show that the introduction of reference locations has an impact on the existence of a triopoly equilibrium. More precisely: the non-existence result is robust to the introduction of small reference costs. When the costs increase above a certain threshold, there exists either a unique equilibrium or no equilibrium, depending on the references. When the cost is asymptotically large, there exists an equilibrium for almost every triplet of references.

\begin{proposition}\label{prop:58noNE}
Suppose that $\gamma(d)=cd^2$ and consider the triopoly competition.\\
(1) If $c < \frac{5}{8}$, there is no equilibrium for any triplet of references $\textbf{r}$.\\
(2) If $c \geq \frac{5}{8}$, there exists either a unique or no equilibrium depending on the triplet of references.\\ 
(3) For every triplet of different references $r_1<r_2<r_3$, there exists a unique equilibrium for $c$ large enough.
\end{proposition}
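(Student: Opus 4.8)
The plan is to classify equilibria by the relative positions of the three chosen locations and to treat three mutually exclusive configurations: all three locations distinct, exactly two coinciding, and all three coinciding. Throughout I write $\delta=\frac{1}{4c}$ and assume $r_1\le r_2\le r_3$. The all-coincident case is dispatched immediately and uniformly in $c$: at a common location $y$ each player collects $\tfrac13$, while an infinitesimal shift to $y+\epsilon$ (resp.\ $y-\epsilon$) turns the deviator into the sole right-most (resp.\ left-most) player and yields $\to 1-y$ (resp.\ $\to y$) at a cost change that vanishes with $\epsilon$; since $1-y>\tfrac13$ or $y>\tfrac13$ necessarily holds, some player always deviates. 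This rules out this configuration for every $c$.

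For the distinct case I would reuse the best-response logic behind Proposition~\ref{prop:2players}: an interior player's resources equal half the distance between his neighbors, independent of his own position, so he locates at his reference; each peripheral player faces constant marginal gain $\tfrac12$ and increasing marginal cost, hence deviates exactly $\delta$ toward the interior. Matching the position order to the reference order, the only distinct candidate is $(r_1+\delta,\,r_2,\,r_3-\delta)$, valid when $r_1+\delta<r_2<r_3-\delta$, and it is an equilibrium iff no player jumps over a neighbor. The binding constraints are the middle player's two "jump-out" conditions, e.g.\ $\tfrac{r_3-r_1}{2}-\delta\ge (r_1+\delta)-c\bigl(r_2-r_1-\delta\bigr)^2$ and its right-hand mirror (together with the peripherals' jump-over-the-middle conditions). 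Optimising over admissible references, the least demanding triple being the symmetric $(0,\tfrac12,1)$, I expect these to reduce to an inequality of the form $4c^2+4c-7\ge 0$, whose threshold $\tfrac{-1+2\sqrt2}{2}\approx 0.914$ exceeds $\tfrac58$; hence the middle player is starved and deviates whenever $c<\tfrac58$, so no distinct equilibrium survives below $\tfrac58$.

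The heart of the argument is the two-coincident ("paired") case, which is what produces the constant $\tfrac58$. Suppose two players sit together at $y$ and the lone player at $z>y$. A marginal-deviation analysis at the tie — moving barely left makes the deviator sole left-most with resources $\to y$, moving barely right makes him interior with resources $\to\frac{z-y}{2}$, against the shared amount $\frac{y+z}{4}$ — forces the resource balance $z=3y$; the lone player's peripheral first-order condition gives $z=r_3-\delta$, so the pair is pinned at $y=\frac{r_3-\delta}{3}$, and tie-stability requires the two paired references to lie in $[y-\delta,y]$. The remaining, and binding, constraint is that a paired player not prefer jumping over the lone player to the far right: $y-c(y-r_i)^2\ge 1-3y-c(3y-r_i)^2$. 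Minimising the slack over admissible references — the extreme being $r_1=r_2=0$, $r_3=1$, whence $y=\frac{1-\delta}{3}$ — this becomes $8cy^2+4y-1\ge 0$, and substituting $y=\frac{1-\delta}{3}$ with $4c\delta=1$ I expect it to collapse to $16c^2-2c-5\ge 0$, i.e.\ exactly $c\ge\tfrac58$, with equality corresponding to the paired player being indifferent to the far jump.

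Assembling the pieces: for $c<\tfrac58$ all three configurations fail, which gives part (1); for $c\ge\tfrac58$ the best responses pin the positions up to the reference order, so at most one profile (one distinct candidate or one paired profile) can survive, and existence is governed by the reference-dependent inequalities above — here one also checks that the distinct and the two paired reference-regions are disjoint — giving the "unique or none" dichotomy of part (2). Finally, for part (3) I fix distinct $r_1<r_2<r_3$ and let $c\to\infty$, so $\delta\to 0$ and the candidate tends to $(r_1,r_2,r_3)$: every jump-over deviation carries a cost $\sim c\,(r_k-r_\ell)^2\to\infty$ that swamps the bounded resource gain, so all inequalities hold, while any coincidence costs $\Theta(c)$ and is excluded (two distinct references cannot both lie in a window of width $\delta\to 0$); hence the distinct candidate is the unique equilibrium for $c$ large. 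The main obstacle is the paired case: exhaustively enumerating the admissible pairings, correctly identifying that the binding deviation is the jump over the lone player rather than any of the marginal or interior moves, and carrying the optimisation over references through to the exact threshold $\tfrac58$ instead of the looser constants produced by the competing deviations.
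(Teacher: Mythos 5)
Your proposal is correct in substance and follows the same skeleton as the paper: sort the references, exclude triple coincidence, pin down the unique candidate in each configuration (single peripherals at $r_i\pm\delta$ by the first-order condition, the interior player at his reference, pairs forced to the one-third point by the equal-split condition), and observe that the far-left/far-right conditions partition reference space, which delivers uniqueness (part 2) and the large-$c$ existence argument (part 3) exactly as in the paper. Where you genuinely depart is part (1). The paper proves non-existence below $\tfrac58$ with crude numerical bounds: $c<\tfrac58$ gives $\delta>\tfrac25$, so the middle (resp.\ paired) player earns less than $\tfrac1{10}$ (resp.\ at most $\tfrac15$) while a jump over a peripheral player nets strictly more. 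You instead maximize the slack of the jump-over constraint over admissible references; your paired-case algebra is right (the slack $8cy^2-4cyr_i+4y-1$ is increasing in $y$ and decreasing in $r_i$, so the optimum is $r_1=r_2=0$, $r_3=1$, where it reduces to $16c^2-2c-5\geq 0$, with positive root exactly $\tfrac58$), and this buys something the paper's bounds obscure: $\tfrac58$ is sharp, since at $c=\tfrac58$ the profile $(\tfrac15,\tfrac15,\tfrac35)$ with references $(0,0,1)$ is an actual equilibrium. (Note you say ``minimising'' the slack, but the logic requires maximising it: the best case must fail.) The one loose end is the distinct case: the claim that $(0,\tfrac12,1)$ is the least demanding triple, with threshold $\tfrac{2\sqrt2-1}{2}\simeq 0.914$, is asserted rather than proven, and establishing that the symmetric triple is the exact optimum is more work than you need. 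It can be bypassed by your own technique applied to the single jump-left constraint, whose slack is at most $\tfrac12-2\delta+c(1-2\delta)^2<0$ whenever $\delta>\tfrac25$, so no all-distinct equilibrium exists below $\tfrac58$ without any extremality claim.
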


This proposition is proved in \ref{proof:58noNE}. In the proof we first characterize the unique equilibrium candidate $\textbf{x}^{\star}$ in \ref{proof:candidate_triopoly} that we now describe, after introducing the useful notion of far-left and far-right players. 

\begin{definition}\label{def:periph_far_players} 
If $r_1+\frac{1}{4c}<r_2$ we say that Player $1$ is a far-left player. If $r_n-\frac{1}{4c}>r_{n-1}$ we say that Player $n$ is a far-right player.\\
In the general case of strictly convex and increasing cost function $\gamma$, we say that Player $1$ is  far-left if $r_1+\delta<r_2$ and that Player $n$ is far right if $r_n-\delta>r_{n-1}$, where $\delta$ is defined in \ref{def:delta}.
\end{definition}

We emphasize that the notion of far player concerns a profile of references $\textbf{r}$ while the notion of peripheral player is a property of a profile of locations $\textbf{x}$. 
As shown in Proposition \ref{prop:58noNE} for three players, and later in Proposition \ref{thm:equilibrium_description} for the general case, a far-left or far-right player has a reference which is so extreme that he never shares his equilibrium location with another player. 
We now describe the equilibrium candidate in the triopoly competition (proof in \ref{proof:candidate_triopoly}). Suppose without loss of generality that $r_1 \leq r_2 \leq r_3$, then:
\begin{itemize}
\item If Player $1$ is far-left and Player $3$ is far-right then the unique possible equilibrium is $\textbf{x}^{\star}=(r_1+\frac{1}{4c},r_2,r_3-\frac{1}{4c})$. 
\item If Player $1$ is not far-left but Player $3$ is far-right, the unique possible equilibrium is $\textbf{x}^{\star}=\left(\frac{r_3-1/4c}{3},\frac{r_3-1/4c}{3},r_3-\frac{1}{4c}\right)$, and symmetrically, if Player $1$ is far-left but Player $3$ is not far-right, the unique possible equilibrium is $\textbf{x}^{\star}=\left(r_1+\frac{1}{4c},\frac{r_1+1/4c+2}{3},\frac{r_1+1/4c+2}{3}\right)$.
\item If neither Player $1$ nor Player $3$ are far-left or far-right, there is no equilibrium.
\end{itemize}

The unique equilibrium candidate $\textbf{x}^{\star}$ is found by considering the necessary condition that at equilibrium players do not have profitable \textit{local deviation}, that is a deviation within the same neighborhood. To fully characterize the set of equilibria, one also has to consider deviations to a different neighborhood. We postpone this discussion to section \ref{se:general_case}: we show that, in the general case of $n$ players, to insure that no players has a deviation in a different neighborhood, the vector of reference locations has to satisfy a system of $n(n-1)$ equations. We now provide intuition and illustration for Proposition \ref{prop:58noNE}.\\


The intuitive reason why the classical non-existence result in the triopoly competition is robust against the introduction of a small cost of deviation from a reference is the following:\\
- If the references are close to $\frac12$, the argument is the same as in the case without costs: all three players want to locate together at the center, but such a configuration is not stable as each player could deviate marginally to his left or right and get half of the interval instead of sharing a third.\\
- Therefore, the most favorable case for existence of an equilibrium is the case where $(r_1,r_2,r_3)=(0,\frac12,1)$. The description of the unique equilibrium candidate above gives that $x_1=\frac{1}{4c}$, $x_2=\frac12$ and $x_3=1-\frac{1}{4c}$, and we find that if $c\leq \frac58$, Player $2$'s payoff is $\frac{1}{5}$ which makes a deviation to the left of Player $1$ (or to the right of Player $2$) profitable.\\


\underline{Illustration:} The graph below is concerned with the existence of equilibrium in the triopoly competition.

\begin{figure}[H]
\begin{center}
\includegraphics[scale=0.17]{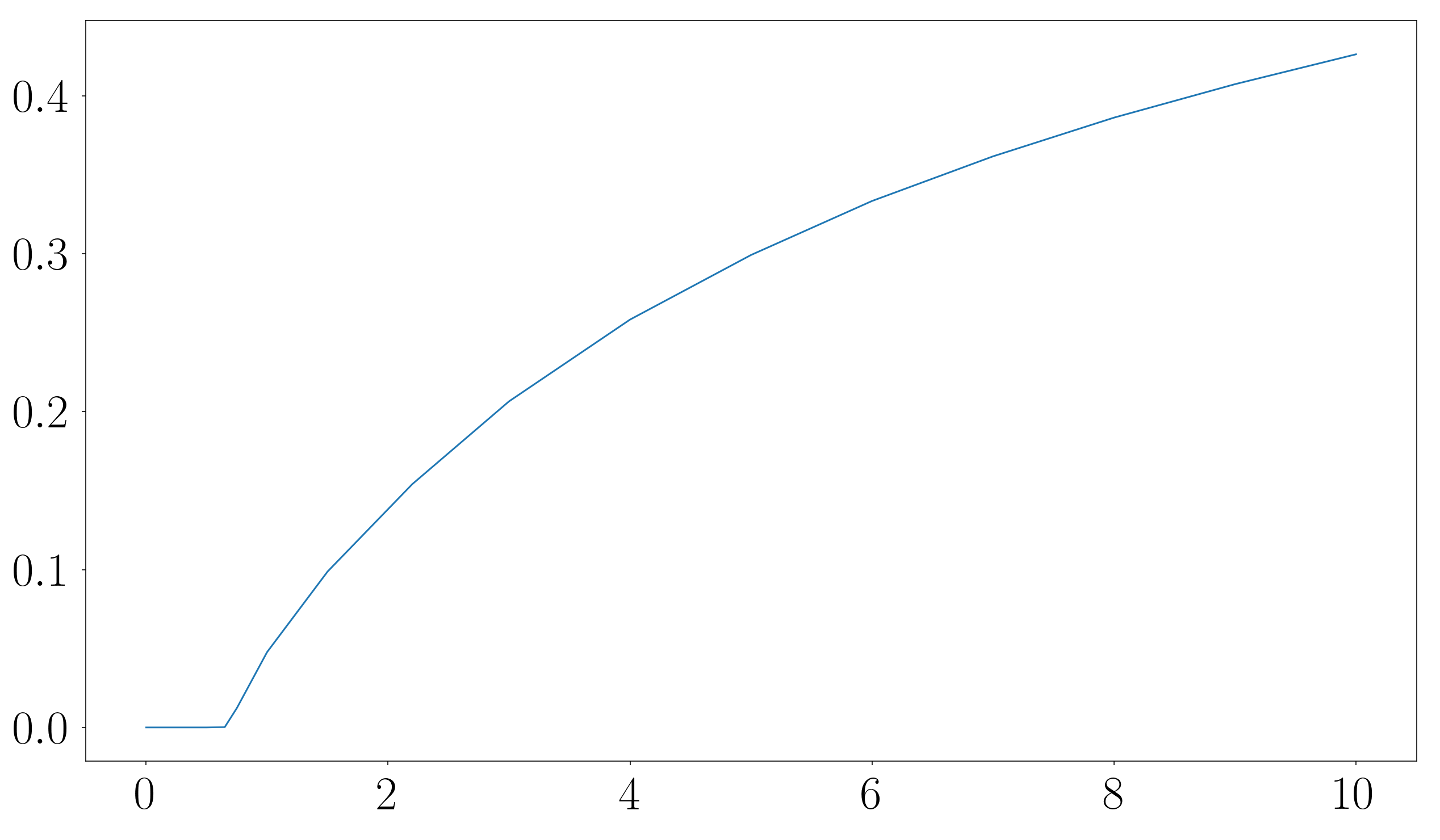}
\caption{Probability that the triopoly competition admits an equilibrium when $\textbf{r}\sim \mathcal{U}([0,1]^3)$ and $\gamma(d)=cd^2$ for $c \in (0,10)$.}
\label{fi:proba_c_triopoly}
\end{center}
\end{figure}

The graph illustrates the fact that for $c$ small enough, the probability that an equilibrium exists is zero (Proposition \ref{prop:58noNE} is stronger as it also shows that there does not exist a negligible set of equilibria neither). Although the probability of existence goes to $1$ as $c \rightarrow +\infty$, it increases slowly: half of the profiles provide an equilibrium only when $c \simeq 15.35$. According to Remark \ref{rmq:interpret_c}, this numerical value of $c$ is relatively large, as it represents players that would opportunistically agree to deviate only by a distance of $\frac{1}{4c}\simeq 0.016$ from their references, in the unit interval.\\
~~\\
As an illustration, we now analyze the triopoly competition in two particular cases.

\subsection{Illustrative example 1: symmetric references profiles}\label{subse:symmetric3p}

We consider the particular case where the references profiles are symmetric. Therefore, we suppose that $(r_1,r_2,r_3)=(r_1,\frac{1}{2},1-r_1)$ with $r_1 \in [0,\frac{1}{2})$.

\begin{proposition}\label{prop:triopoly_symmetric}
Suppose that $\gamma(d)=cd^2$. The triopoly competition with references $\textbf{r}=(r_1,\frac{1}{2},1-r_1)$ admits a unique possible equilibrium which is $\textbf{x}^{\star}=(r_1+\frac{1}{4c},\frac{1}{2},1-r_1-\frac{1}{4c})$. This is an equilibrium if and only if $r_1 \leq \phi(c):=\frac{3+2c-2\sqrt{4+2c}}{4c}$.
\end{proposition}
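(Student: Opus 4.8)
The plan is to invoke the general equilibrium-candidate characterization of the triopoly (the three-case description preceding Proposition \ref{prop:triopoly_symmetric}) and then reduce the verification to a single deviation inequality. First I would observe that in the symmetric profile $\mathbf{r}=(r_1,\tfrac12,1-r_1)$ the far-left and far-right conditions coincide: Player $1$ is far-left ($r_1+\tfrac{1}{4c}<\tfrac12$) if and only if Player $3$ is far-right ($1-r_1-\tfrac{1}{4c}>\tfrac12$), both being equivalent to $r_1<\tfrac12-\tfrac{1}{4c}$. When this holds, the first case of the characterization gives the unique candidate $\mathbf{x}^{\star}=(r_1+\tfrac{1}{4c},\tfrac12,1-r_1-\tfrac{1}{4c})$; otherwise neither player is far and no equilibrium exists. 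A short computation shows $\phi(c)\le\tfrac12-\tfrac{1}{4c}$ for every $c>0$ (it reduces to $2\le\sqrt{4+2c}$), so the candidate claimed in the statement is indeed the only possible equilibrium and the entire existence question lives inside the far regime. Local (within-neighborhood) deviations are unprofitable by the very construction of the candidate: Player $2$ is interior, so his attracted resources do not depend on his position between his neighbors and cost minimization pins him at $\tfrac12$, while the two peripheral players satisfy the first-order condition $\gamma'(\tfrac{1}{4c})=\tfrac12$.

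It then remains to rule out cross-neighborhood deviations, and here I would exploit the symmetry of $\mathbf{x}^{\star}$ about $\tfrac12$ to consider only leftward jumps. I would first argue that the peripheral players never gain from jumping over an opponent. If Player $1$ jumps into the gap $(\tfrac12,x_3^{\star})$ he becomes interior and collects exactly half of that gap, $\tfrac12(\tfrac12-r_1-\tfrac{1}{4c})$, which in the far regime is strictly smaller than his equilibrium share $\tfrac14+\tfrac12(r_1+\tfrac{1}{4c})$ while his cost strictly increases; if instead he jumps to the right of Player $3$ his best resources tend to $r_1+\tfrac{1}{4c}<\tfrac14+\tfrac12(r_1+\tfrac{1}{4c})$, again at strictly larger cost. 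Both deviations are therefore slack, and by symmetry so are those of Player $3$. This isolates the single binding constraint: Player $2$ relocating to the left of Player $1$.

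Finally I would turn the remaining constraint into the stated inequality. Player $2$'s equilibrium payoff is $g_2(\mathbf{x}^{\star})=\tfrac12-r_1-\tfrac{1}{4c}$ (no cost, since he sits at his reference). If he moves to $x<x_1^{\star}$ he becomes the left peripheral player, with resources $\tfrac12(x+r_1+\tfrac{1}{4c})$ and cost $c(x-\tfrac12)^2$; this payoff is increasing in $x$ on $[0,x_1^{\star})$, so its supremum is attained in the limit $x\uparrow x_1^{\star}$ (the standard $\epsilon\downarrow0$ deviation), giving the value $(r_1+\tfrac{1}{4c})-c\bigl(r_1+\tfrac{1}{4c}-\tfrac12\bigr)^2$. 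Writing $b:=\tfrac12-r_1-\tfrac{1}{4c}$, the no-deviation inequality $g_2(\mathbf{x}^{\star})\ge(r_1+\tfrac{1}{4c})-c(\tfrac12-r_1-\tfrac{1}{4c})^2$ becomes $cb^2+2b-\tfrac12\ge0$; taking the positive root of the quadratic yields $b\ge\frac{-2+\sqrt{4+2c}}{2c}$, which rearranges exactly to $r_1\le\frac{3+2c-2\sqrt{4+2c}}{4c}=\phi(c)$. The main obstacle is the second paragraph: one must be sure that Player $2$'s jump is genuinely the only binding deviation, which requires the routine but necessary case check that both peripheral players' cross-deviations are dominated throughout the far regime; the only other point needing care is the passage to the supremum through the limiting deviation rather than an interior maximizer.
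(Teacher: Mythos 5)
Your proof is correct and follows essentially the same route as the paper: invoke the triopoly equilibrium-candidate characterization, note that symmetry forces the far-left/far-right regime (your explicit check $\phi(c)\le\frac12-\frac{1}{4c}$ makes precise a point the paper leaves implicit), and reduce the verification to Player $2$'s limiting jump just left of Player $1$, whose payoff comparison is exactly the paper's inequality $\frac{1-2r_1-2\delta}{2}\ge r_1+\delta-c\left(\frac12-r_1-\delta\right)^2$ rewritten via your substitution $b=\frac12-r_1-\frac{1}{4c}$. The only step you leave implicit, which the paper spells out, is that deviations exactly onto an occupied location ($x_1$, $x_2$ or $x_3$) are also unprofitable, since such a payoff equals the average of the two one-sided limits, each of which is already dominated.
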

The proof is postponed to subsection \ref{proof:triopoly_symmetric}.\\
~~\\
This result illustrates Proposition \ref{prop:58noNE} as $\phi$ is an increasing function: if $c$ is small enough, there exists no equilibrium. We can compute that for $c \leq \frac{2\sqrt{2}-1}{2} \simeq 0.914$ there exists no equilibrium for any symmetric profile of location. It improves the general threshold of $\frac58$ provided for the general case. We also observe that $\phi(c)$ goes to $\frac{1}{2}$ when $c$ goes to $+\infty$, so that the game admits a unique equilibrium for every $r_1 \in [0,\frac{1}{2})$ if $c$ is large enough.

\subsection{Illustrative example 2: one reference is fixed to $\frac12$}\label{subse:algo3p}

The following figure illustrates the equilibrium characterized in subsection \ref{subse:equilibrium_3p} in the case where a reference is fixed to $\frac{1}{2}$ and $\gamma(d)=10d^{2}$. 

\begin{figure}[H]
\begin{center}
\includegraphics[scale=0.15]{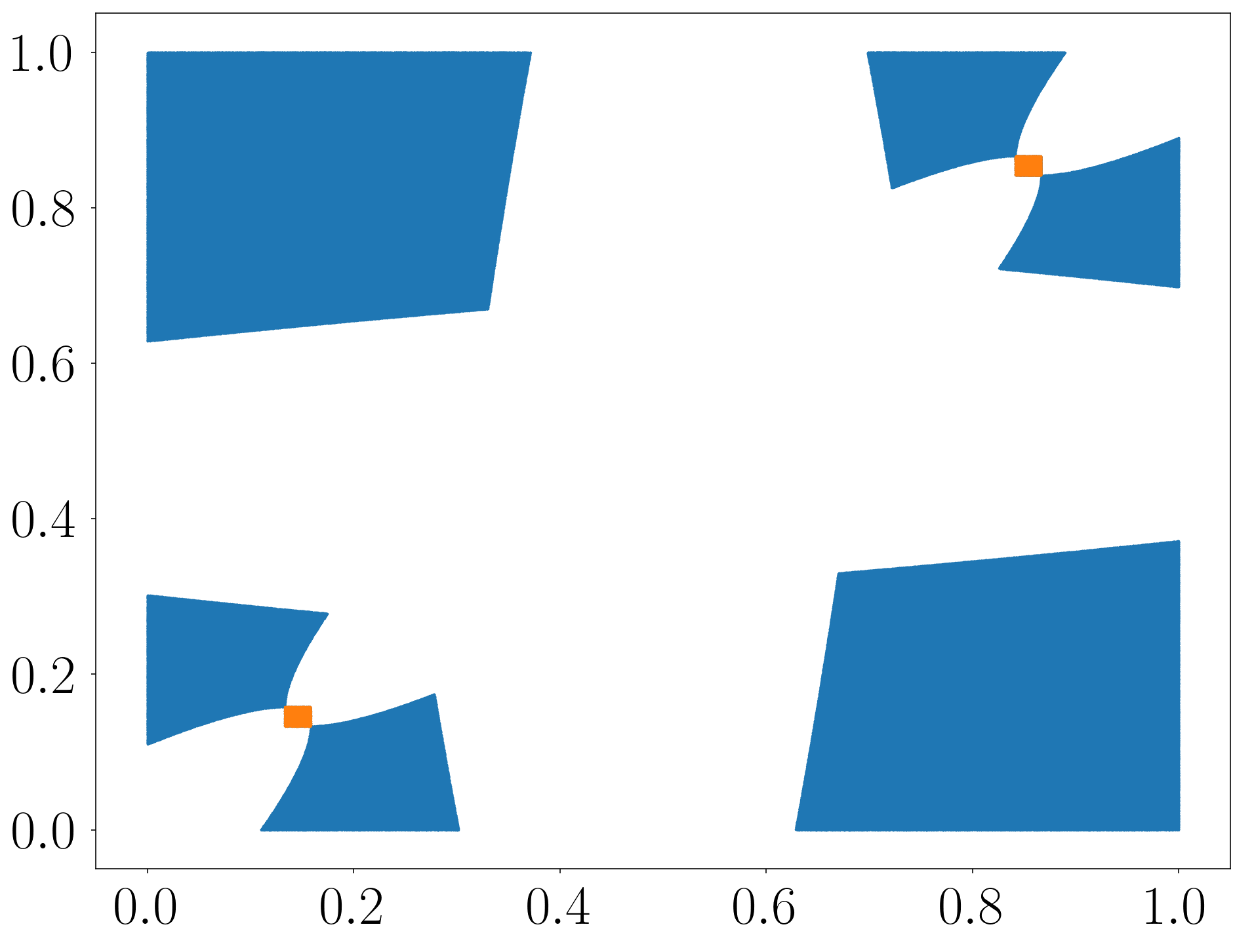}
\end{center}
\caption{The equilibrium structure with respect to $(r_1,r_2)\in[0,1]^2$ when $r_3=\frac{1}{2}$. In blue: there exists a unique equilibrium where all players select different locations, in orange: there exists a unique equilibrium where two players select the same location, in white there is no equilibrium.}
\label{fi:triopoly_one_is_fixed}
\end{figure}


When both $r_1$ and $r_2$ are far from $\frac12$, the figure show 4 symmetrical sub-figures:\\ 
- the bottom-left sub-figure illustrates the case where $r_1 \leq r_2 \leq r_3=\frac{1}{2}$. This sub-figure is quite similar to the figure \ref{fi:duopoly} obtained in the duopoly competition. This is explained by the fact that the third player locates at $r_3=\frac12-\frac{1}{4c}$ and that the two players compete on the sub-interval $[0,x_3)$. A deviation to the right of $x_3$ would be too costly. As in the duopoly competition, the orange rectangle represents the cases where the two players select the same location, which is here $\frac16-\frac{1}{12c}$, i.e. the location where they attract the same quantity of resources from the left and from the right when the third player is located at $\frac12-\frac{1}{4c}$ (because claim 2 of Lemma \ref{prop:3claims} applies).\\
- take now the top left sub-figure, concerned with the case where $r_1 \leq r_3=\frac{1}{2} \leq r_2$. In this case, we find that the equilibrium is differentiated, $\textbf{x}=(r_1+\frac{1}{4c},\frac{1}{2},r_2-\frac{1}{4c})$ and the condition for this profile to be an equilibrium is that there is no profitable deviation of Player $2$ or $3$ to the left of Player $1$, and no profitable deviation of Player $1$ or $3$ to the left of Player $2$. These conditions involve quadratic equation that are similar to $(E1)$ and $(E2)$ in Proposition \ref{prof:prop2players}.\\
The top-right and bottom-right sub-figures are symmetric to the previous ones. Indeed, the figure is symmetric with respect to the linear transformation $(x',y')=(y,x)$ as players are anonymous and $(x',y')=(1-y,1-x)$ as the unit interval with uniform density is symmetric.

\section{The general case of $n$ players}\label{se:general_case}

In this section, we only suppose that the cost function $\gamma$ is a $C^1$, strictly convex and strictly increasing function. We first prove that for any vector of references the game admits at most one equilibrium and we characterize the unique equilibrium candidate and its properties. We then consider the sufficient equilibrium conditions and we prove that these (infinitely many) conditions can be reduced to a list of $n(n-1)$ equations. Finally, we illustrate our result with the particular case of regularly located references.

\subsection{Existence, uniqueness and characterization of the equilibrium}\label{subse:existence}
 
The next proposition deals with the case of $n \geq 5$ players. The duopoly and triopoly are studied in section \ref{se:duopoly} and \ref{se:triopoly}. The case where $n=4$ is also particular because it is possible that all players are peripheral, which is not possible when $n \geq 5$ as proved in Claim 1 of Proposition \ref{prop:3claims}. For the sake of completeness, the case of $4$ player is provided in subsection \ref{subsec:4players} and poses no conceptual difficulties.\\
~~\\ 
Proposition \ref{thm:equilibrium_description} exhibits the unique equilibrium candidate and Proposition \ref{prop:sufficient_general} provides necessary and sufficient conditions for this candidate to be an equilibrium. The parameter $\delta$ is defined in \ref{def:delta} and the notions of far-left and far-right players are introduced in \ref{def:periph_far_players}. 

\begin{proposition}\label{thm:equilibrium_description}
Let $n \geq 5$ and $r_1 \leq r_2 \leq \dots \leq r_n$. The unique possible equilibrium $\textbf{x}^{\star}$ of the game with references $\textbf{r}$ is the following:
\begin{itemize}
\item If Player $1$ is a far-left player, he is the only left-peripheral player and locates at $x_1^{\star}=r_1+\delta$. Otherwise, both Player $1$ and Player $2$ are the left-peripheral players and they locate at $\frac{r_3}{3}$.
\item Symmetrically, if Player $n$ is a far-right player, he is the only right-peripheral player and locates at $x_n^{\star}=r_n-\delta$. Otherwise, both Player $n$ and Player $n-1$ are the right-peripheral players and they locate at $\frac{2+r_{n-2}}{3}$.
\item All non peripheral players locate at their references: $x_i^{\star}=r_i$.
\end{itemize}
\end{proposition}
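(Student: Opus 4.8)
The plan is to take an arbitrary equilibrium $\textbf{x}$, relabel players so that their locations are weakly increasing (and establish a posteriori that this order coincides with the reference order), and then extract necessary conditions from two families of deviations: \emph{local} deviations that keep a player strictly inside his current neighborhood, and \emph{marginal} deviations that push a player just across one of his neighbors. These necessary conditions will pin down every coordinate of $\textbf{x}$, proving that there is at most one equilibrium and identifying it with $\textbf{x}^{\star}$.

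First I would isolate the workhorse observation announced in the introduction. If player $i$ is alone at his location and non-peripheral, then his attracted mass is $q_i=\tfrac12(x_i^{r}-x_i^{\ell})$, which does not depend on $x_i$ as long as $x_i$ stays strictly between his two neighbors. On this neighborhood his payoff is therefore $\tfrac12(x_i^{r}-x_i^{\ell})-\gamma(|x_i-r_i|)$, and since $\gamma$ is strictly increasing the unique local best response is the point of $(x_i^{\ell},x_i^{r})$ closest to $r_i$. If $r_i$ lay at or beyond one of the neighbors, the payoff would be strictly monotone on the neighborhood with its supremum attained only in the limit at that neighbor, so the player could profitably move toward it; at equilibrium this is impossible unless he ties with that neighbor. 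Consequently every non-peripheral lone player satisfies $r_i\in(x_i^{\ell},x_i^{r})$ and $x_i^{\star}=r_i$, which is the third bullet. This same strict monotonicity is also what forces the equilibrium locations to be comonotone with the references, so that the peripheral players are exactly the extreme-reference players $1,2$ on the left and $n-1,n$ on the right.

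Next I would treat the peripheral players, say on the left by symmetry. A lone left-peripheral player has $q=\tfrac12(x+x^{r})$, which increases in $x$ with slope $\tfrac12$; balancing this constant marginal gain against the increasing marginal cost $\gamma'(x-r_1)$ gives the first-order condition $\gamma'(x-r_1)=\tfrac12$, i.e. $x=r_1+\delta$ (Definition \ref{def:delta}), and this lone configuration is consistent precisely when $r_1+\delta<r_2$, which is the far-left condition of Definition \ref{def:periph_far_players}; this yields the first half of the first bullet. If instead Player $1$ is not far-left, his desired lone position $r_1+\delta\ge r_2$ lies at or past Player $2$, no lone-leftmost profile survives (he keeps wanting to move onto Player $2$), and Players $1$ and $2$ must pair at the leftmost location, with no third player joining them by the no-triple part of Lemma \ref{prop:3claims}. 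To locate the pair, write $x$ for the shared leftmost location and $x_3$ for the first player to its right: each paired player attracts $\tfrac14(x+x_3)$, while a marginal deviation to the left makes the deviator alone and yields $x$ in the limit and a marginal deviation to the right yields $\tfrac12(x_3-x)$, the cost terms cancelling in both limits. Ruling out both deviations forces $x\le\tfrac{x_3}{3}$ and $x\ge\tfrac{x_3}{3}$ simultaneously, hence $x=\tfrac{x_3}{3}$ (this is the balance recorded in Claim 2 of Lemma \ref{prop:3claims}); since $n\ge5$ Player $3$ is non-peripheral and sits at $x_3=r_3$ by the first step, giving the pair location $\tfrac{r_3}{3}$. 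The right-hand statements follow verbatim from the reflection $y\mapsto 1-y$, which maps the far-left analysis onto the far-right one and the left pair location onto $\tfrac{2+r_{n-2}}{3}$.

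The delicate part is the bookkeeping that glues these local computations into a single global profile, and I expect the main obstacle to be the comonotonicity between locations and references: the labels $1,2$ and $n-1,n$ in the statement presuppose it, yet a crossed pair cannot be un-crossed by a single admissible deviation, so one must instead compare a mislocated player's equilibrium payoff with the payoff he would obtain by relocating onto his own reference. The second point requiring care is closing the not-far dichotomy by showing that exactly two players pair -- never one and never three -- which couples the marginal pairing argument above with the no-triple and at-least-one-interior consequences of Lemma \ref{prop:3claims}, both valid only for $n\ge5$. Once these structural facts are in place, every coordinate of any equilibrium is determined, so the candidate $\textbf{x}^{\star}$ is unique.
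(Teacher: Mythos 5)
Your proposal follows essentially the same route as the paper's proof in subsection \ref{proof:general}: interior players are pinned to their references because their clientele is locally constant (Lemma \ref{le:non_periph_at_ref}), the leftmost players are resolved through the far-left dichotomy (lone player at the first-order condition $\gamma'(x_1-r_1)=\frac12$, i.e.\ $x_1=r_1+\delta$, versus a pair at $\frac{r_3}{3}$ by the balance of Claim 2 of Lemma \ref{prop:3claims}), and the right side follows by reflection. However, there is a genuine gap in the far-left case: you show that the lone configuration is consistent exactly when $r_1+\delta<r_2$, and that not-far-left forces pairing, but you never exclude the \emph{paired} configuration when Player $1$ is far-left. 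Without that exclusion the uniqueness claim --- and the assertion that a far-left Player $1$ is the \emph{only} left-peripheral player --- is not proved: a priori the profile with $x_1=x_2=\frac{r_3}{3}$ could be a second equilibrium. The paper rules it out ad absurdum: if Players $1$ and $2$ were paired at $\frac{r_3}{3}$, Player $2$'s no-deviation-to-the-right condition forces $r_2\leq\frac{r_3}{3}$, hence $\frac{r_3}{3}-r_1>\delta$ by far-leftness; but Player $1$'s no-infinitesimal-deviation-to-the-left condition requires $\gamma'\left(\frac{r_3}{3}-r_1\right)\leq\frac12=\gamma'(\delta)$, contradicting the strict monotonicity of $\gamma'$. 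Note that your observation that the cost terms ``cancel in both limits'' cannot detect this: that cancellation is a zeroth-order statement used to locate the pair, whereas excluding the far-left pair needs the first-order comparison between the clientele loss rate $\frac12$ and the cost reduction rate $\gamma'(x_1-r_1)$ along a leftward move --- exactly the comparison you already used for the lone player.

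A second, more minor concern is your plan for comonotonicity: comparing a mislocated player's payoff with what he would obtain ``by relocating onto his own reference'' is a deviation that need not be profitable, since jumping onto $r_j$ can land Player $j$ squeezed between other players so that his clientele collapses even though his cost vanishes. The paper's proof of Claim 6 of Lemma \ref{prop:3claims} instead uses clientele-preserving moves toward the reference \emph{within the same neighborhood} (Claims 3 and 4), combined with a case analysis on whether $x_i\leq r_j$ or $r_j<x_i$. Since you already invoke Lemma \ref{prop:3claims} elsewhere, the cleanest repair is to cite its Claim 6 directly rather than re-derive it with a deviation that fails in general.
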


The proof is given in subsection \ref{proof:general}. We now discuss the interesting properties of the equilibrium candidate:\\
(1) the equilibrium, when it exists, is unique. This property differs from the analysis of location games without references, where there exists for $n>6$ a continuum of equilibria (a polyhedron of dimension $n-5$). The reason for uniqueness in the current paper is that the cost of deviating from the reference location is a tie-breaking rule: while the quantity of resources that a player attracts does not depend on the exact position of a player in a fixed neighborhood, the total payoff is not constant as the costs are strictly increasing with the distance.\\
(2) at equilibrium, at most $4$ players select a location which is different from their references: the two most left and most right players. Therefore, when studying the game with many players, only the border effects have to be looked at, i.e. the behavior of players $1$ and $2$ (symmetrically of players $n-1$ and $n$). All the interior players locate at their references.\\
(3) the border effects are the same for any $n\geq 3$: either player $1$'s reference is far enough to player $2$'s reference and then player $1$ acts as a monopolist and satisfies his first order conditions, or the $2$ most left references are close enough and players select the same location. In this case the choice of this location is dictated by \ref{prop:3claims}, claim 2: they must be located at $\frac{1}{3}$ of the interval $[0,r_3]$. The argument is symmetric on the right.\\
~~\\
We illustrate Proposition \ref{thm:equilibrium_description} with a numerical example.

\begin{example}\label{ex:illustration_general_thm} Suppose that $\bold{r}=(0.10,0.13,0.42,0.52,0.75,0.95)\in[0,1]^{6}$ and $\gamma(d)=5d^2$. Player 1 is not far-left since $r_2-r_1 \leq \delta = 0.05$, but Player $6$ is far-right since $r_6-r_5>\delta$. The profile $\textbf{x}^{\star}$ is therefore such that $x_1=x_2=r_3/3$ and such that $x_6=r_6-\delta$. All other players selects $x_i=r_i$.
Thus, $\bold{x}^{\star}=(\frac{r_3}{3},\frac{r_3}{3},r_3,r_4,r_5,r_6-\delta)=(0.14,0.14,0.42,0.52,0.75,0.90)$ is the unique equilibrium candidate.

\begin{figure}[H]\label{figure:example}
\centering
\begin{tikzpicture}[>=latex]

    \draw[-] (0,0) --(10,0);
    \node[] at (0.9,0) {$|$};
    \node[] at (1.2,0) {$|$};
    \node[] at (4.2,0) {$|$};
    \node[] at (5.2,0) {$|$};
    \node[] at (7.5,0) {$|$};
    \node[] at (9.5,0) {$|$};

    \draw[draw=red,line width=1pt] (0,0)-- (2.9,0);
    \draw[draw=green,line width=1pt] (2.9,0)-- (4.7,0);
    \draw[draw=blue,line width=1pt] (4.7,0)-- (6.35,0);
    \draw[draw=yellow,line width=1pt] (6.35,0)-- (8.25,0);
    \draw[draw=orange,line width=1pt] (8.25,0)-- (9.98,0);

    \node[below=8pt] at (0,0) {$0$};
    \node[below=8pt] at (0.85,0) {$r_1$};
    \node[below=8pt] at (1.25,0) {$r_2$};
    \node[below=8pt] at (4.2,0) {$r_3$};
    \node[below=8pt] at (5.2,0) {$r_4$};
    \node[below=8pt] at (7.5,0) {$r_5$};
    \node[below=8pt] at (9.5,0) {$r_6$};
    \node[below=8pt] at (10,0) {$1$};

    \tikzset{decorate sep/.style 2 args=
{decorate,decoration={shape backgrounds,shape=circle,shape size=#1,shape sep=#2}}}

    \node[circle,fill,inner sep=1.9pt] at (1.5,0) {};
    \node[circle,fill,inner sep=1.9pt] at (4.2,0) {};
    \node[circle,fill,inner sep=1.9pt] at (5.2,0) {};
    \node[circle,fill,inner sep=1.9pt] at (7.5,0) {};
    \node[circle,fill,inner sep=1.9pt] at (9,0) {};

    \node[above=8pt] at (1.5,0) {$x_1$};
    \node[above=16pt] at (1.5,0) {$x_2$};
    \node[above=8pt] at (4.2,0) {$x_3$};
    \node[above=8pt] at (5.2,0) {$x_4$};
    \node[above=8pt] at (7.5,0) {$x_5$};
    \node[above=8pt] at (9,0) {$x_6$};

\end{tikzpicture}
\end{figure}
\end{example}


\begin{proposition}\label{prop:sufficient_general} 
The equilibrium candidate described in Proposition \ref{thm:equilibrium_description} is an equilibrium if and only if the three following conditions hold:\\
(1) If Player 1 is not far-left then $\frac{r_3}{3} \in [r_2,r_1+\delta]$.\\
(2) Symmetrically, if Player $n$ is not far-right then $\frac{2+r_{n-2}}{3} \in [r_n-\delta,r_{n-1}]$.\\
(3) $\Delta_i^j(\textbf{x}^{\star}) \geq 0$ for every  $i \neq j \in \{1,\dots,n\}$,
where $\Delta_i^j(\textbf{x})$ denote the loss of player $i$ when he deviates infinitesimally close to Player $j$'s location when $\textbf{x}$ is played. More precisely:
$$
\Delta_i^j(\textbf{x})= \begin{cases} 
\lim\limits_{\substack{\epsilon \to 0 \\ \epsilon>0}} u_i(\textbf{x})-u_i(x_j+\epsilon,\textbf{x}_{-i})  \text{ if } x_i<x_j\\
\lim\limits_{\substack{\epsilon \to 0 \\ \epsilon>0}} u_i(\textbf{x})-u_i(x_j-\epsilon,\textbf{x}_{-i})   \text{ if } x_i>x_j\\
0  \text{ if } x_i=x_j
\end{cases}$$
\end{proposition}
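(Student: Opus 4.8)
The plan is to reduce the equilibrium test to the finitely many candidate best responses of each player and to match each of them to one of the three conditions. First I would record the shape of the one-variable deviation payoff $y \mapsto g_i(y,\textbf{x}^{\star}_{-i})$, already implicit in the proof of Proposition \ref{thm:equilibrium_description}: fixing $\textbf{x}^{\star}_{-i}$ splits $[0,1]$ into finitely many maximal intervals free of the points $x_j^{\star}$, on which the resource term $q_i$ is affine with slope $+\tfrac12$ on the leftmost interval $[0,\min_{j}x_j^{\star})$, slope $0$ on every interior interval, and slope $-\tfrac12$ on the rightmost interval, with a downward jump at each $x_j^{\star}$ caused by the equal-splitting rule. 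Since $\gamma(|y-r_i|)$ is convex, the payoff is concave on each interval, so its supremum over $[0,1]$ is attained on a finite list: the reference $r_i$; the interior optima $r_i+\delta$ and $r_i-\delta$ when they fall inside the leftmost resp.\ rightmost interval; and the one-sided limits $x_j^{\star\pm}$ at the neighbours. This is the finiteness of best responses announced in the introduction, and it turns the equilibrium property into finitely many inequalities.

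Next I would identify the one-sided-limit deviations with the quantities $\Delta_i^j(\textbf{x}^{\star})$. When player $i$ crosses a neighbour $x_j^{\star}$, the cost on the far interval is minimised at the very boundary, because $r_i=x_i^{\star}$ lies on the near side, so the supremum of his payoff on every interval abutting some $x_j^{\star}$ equals the limit defining $\Delta_i^j$; requiring all these to be non-improving is exactly condition (3). For an interior player, located at $x_i^{\star}=r_i$ with zero cost and with $q_i$ constant throughout his own neighbourhood, these crossing moves are the only ones that can raise $q_i$, and one checks that his $r_i\pm\delta$ never fall in a peripheral interval; hence condition (3) alone rules out all his deviations. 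The same holds for a far-left (resp.\ far-right) player, who by Definition \ref{def:periph_far_players} already sits at the interior optimum $r_1+\delta$ (resp.\ $r_n-\delta$) of his monotone extreme interval. This explains why (1) and (2) are only imposed in the paired case.

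It then remains to treat the paired peripheral players, which is precisely what conditions (1) and (2) govern. When Player $1$ is not far-left, Players $1$ and $2$ share the location $\tfrac{r_3}{3}$ given by Claim 2 of Lemma \ref{prop:3claims}; since $x_1^{\star}=x_2^{\star}$ we have $\Delta_1^2=\Delta_2^1=0$, so (3) is silent about their mutual marginal moves, and I would test these by hand. A rightward move into the interior interval $(\tfrac{r_3}{3},r_3)$ keeps $q$ unchanged but raises the cost exactly when the shared location satisfies $\tfrac{r_3}{3}\ge r_2\ (\ge r_1)$; a leftward move into the now-monotone leftmost interval is dominated by the boundary exactly when the interior optimum $r_1+\delta$ lies weakly to the right of the shared location, i.e.\ $\tfrac{r_3}{3}\le r_1+\delta$ (the analogue for Player $2$ being implied by $r_2\ge r_1$). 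Thus (1) says precisely that neither paired player profits from a marginal relocation off $\tfrac{r_3}{3}$, and (2) is its mirror image on the right. Conversely, whenever one of (1), (2), (3) fails the associated deviation is strictly profitable, which gives necessity.

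The main obstacle I anticipate is not a single computation but the completeness bookkeeping: proving that the finite candidate list genuinely exhausts all deviations and that each item is covered by exactly one condition with no gap. The two delicate points are (i) showing that on every interior interval the cost-minimising point is a boundary already seen by some $\Delta_i^j$, so that interior intervals contribute no new inequality, and (ii) the paired case, where the vanishing of $\Delta_1^2$ forces (1)--(2) to carry the marginal-deviation test that $\Delta$ cannot detect. Once these are settled, the equivalence follows by collecting the finitely many inequalities.
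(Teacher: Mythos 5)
Your proposal is correct and follows essentially the same route as the paper's proof: interval-by-interval concavity/monotonicity of the deviation payoff, identification of the supremum on each interval abutting an occupied location with the one-sided limits defining $\Delta_i^j$ (condition (3)), and a hand-check of the marginal moves of paired peripheral players, which is exactly what conditions (1) and (2) encode. The only step the paper makes explicit that you leave implicit is the deviation landing exactly on an occupied location $x_j^{\star}$: by the equal-splitting rule its payoff equals the average of the two one-sided limits, hence is dominated by them, which settles the remaining item in your ``completeness bookkeeping'' concern.
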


Conditions $(1)$ and $(2)$ are concerned with local deviations of peripheral players when they are paired. If player $1$ is not a far-left player, then the unique possible equilibrium described in Proposition \ref{thm:equilibrium_description} is such that $x_1=x_2=\frac{r_3}{3}$. In the proof, we show that if $\frac{r_3}{3}>r_1+\delta$ then Player $1$ has a profitable deviation: $\frac{r_3}{3}$ is too far from his reference (at a distance larger than $\delta$, which the maximal distance he can deviate from $r_1$ as a peripheral player). On the other hand, if $\frac{r_3}{3}<r_2$ then Player $2$ would improve his payoff by playing $\frac{r_3}{3}+\epsilon$ for $\epsilon>0$ small enough: the quantity of attracted resources is the same but the reference cost is smaller.\\
Condition $(3)$ is concerned with global deviation: if $\Delta_i^j(\textbf{\textit{x}}^{\star})<0$ then player $i$ has a profitable deviation by selecting a location infinitesimally close to player $j$: the market opportunity is larger than the reference cost of such a deviation. Among the infinity of possible deviations, we show that if these $n-1$ deviations are not profitable, then the profile is an equilibrium.

\begin{example} We come back to example \ref{ex:illustration_general_thm}. The equilibrium candidates is $\bold{x}^{\star}=(\frac{r_3}{3},\frac{r_3}{3},r_3,r_4,r_5,r_6-\delta)=(0.14,0.14,0.42,0.52,0.75,0.90)$.\\
It satisfies condition $(1)$ as $\frac{r_3}{3}=0.14 \in [0.13,0.15]=[r_2,r_1+\delta]$. Condition $(2)$ is empty as player $6$ is far-right. Condition $(3)$ translates to:\\
$(\Delta_1^j)_{j \neq 1}= (0,~0.594,~0.899,~2.1695,~3.232)$\\
$(\Delta_2^j)_{j \neq 2}= (0,~0.51,~0.785,~1.9865,~3.004)$\\
$(\Delta_3^j)_{j \neq 3}= (0.442,~0.442,~0.3395,~0.6595,~1.242)$\\
$(\Delta_4^j)_{j \neq 4}= (0.747,~0.747,~0.075,~0.3545,~0.787)$\\
$(\Delta_5^j)_{j \neq 5}= (1.9105,~1.9105,~0.5945,~0.4045,~0.2025)$\\
$(\Delta_6^j)_{j \neq 6}= (3.303,~3.303,~1.427,~1.037,~0.2475)$\\
we see that $\Delta_i^j \geq 0$ for every $i \neq j$ so the sufficient conditions are satisfied and $\bold{x}^{\star}$ is an equilibrium.
\end{example}

\subsection{Illustrative example: regular location of references}

In this subsection, we consider the case where references are regularly distributed on $[0,1]$ to illustrate our results.

\begin{proposition}\label{prop:regular_n_players}
In the game with $n \geq 5$ players where $\gamma(d)=cd^2$ and $r=\left(\frac{1}{n+1},\frac{2}{n+1},...,\frac{n}{n+1}\right)$, we have that:
\begin{center}
$\textbf{x}^{\star}=\left(\frac{1}{n+1}+\frac{1}{4c},\frac{2}{n+1},...,\frac{n}{n+1}-\frac{1}{4c}\right)$ is an equilibrium if and only if $c \geq \frac{1+\sqrt{6}}{4}(n+1)$.
\end{center}
Otherwise there is no equilibrium.
\end{proposition}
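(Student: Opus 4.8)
The plan is to invoke the two preceding propositions: Proposition \ref{thm:equilibrium_description} pins down the unique candidate, and Proposition \ref{prop:sufficient_general} supplies the equilibrium test. Since $\gamma(d)=cd^2$ we have $\delta=\frac{1}{4c}$, and for the regular references $r_i=\frac{i}{n+1}$ Player $1$ is far-left iff $r_1+\delta<r_2$, i.e. iff $c>\frac{n+1}{4}$; the symmetric computation shows Player $n$ is far-right under the identical condition. Hence exactly when $c>\frac{n+1}{4}$ does the candidate of Proposition \ref{thm:equilibrium_description} take the differentiated form $\textbf{x}^\star=(r_1+\delta,r_2,\dots,r_{n-1},r_n-\delta)$ stated here, both peripheral pairs being unpaired. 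I would first dispose of the low-cost regime $c\le\frac{n+1}{4}$: there Player $1$ is not far-left, so the candidate has $x_1=x_2=\frac{r_3}{3}=\frac{1}{n+1}=r_1$, which violates condition (1) of Proposition \ref{prop:sufficient_general} because $\frac{r_3}{3}=\frac{1}{n+1}<\frac{2}{n+1}=r_2$; by uniqueness there is then no equilibrium, consistent with the claim (note $\frac{n+1}{4}<\frac{1+\sqrt6}{4}(n+1)$). It remains to treat $c>\frac{n+1}{4}$, where conditions (1)--(2) are vacuous and everything reduces to condition (3), namely $\Delta_i^j(\textbf{x}^\star)\ge 0$ for all $i\ne j$.

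Next I would compute the equilibrium quantities. A direct midpoint computation gives $u_1=u_n=\frac{3}{2(n+1)}+\frac{1}{16c}$, $u_2=u_{n-1}=\frac{1}{n+1}-\frac{1}{8c}$, and $u_i=\frac{1}{n+1}$ for $3\le i\le n-2$. The gaps of $\textbf{x}^\star$ are $\frac{1}{n+1}$ in the interior, $\frac{1}{n+1}-\frac{1}{4c}$ next to each unpaired peripheral player, and the two boundary pockets $[0,x_1]$ and $[x_n,1]$, each of length $\frac{1}{n+1}+\frac{1}{4c}$. When a player relocates infinitesimally next to Player $j$, the resources he captures equal half of the relevant interior gap, or the whole boundary pocket when he slides to the outer side of a peripheral player. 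Since the boundary pockets are the largest capturable regions, the most tempting deviations seize one of them, and the cheapest player to seize the left pocket $[0,x_1]$ is Player $2$, whose reference is closest to $x_1$ among Players $2,\dots,n$ (indeed $r_2-x_1=\frac{1}{n+1}-\frac{1}{4c}<r_i-x_1$ for $i\ge 3$). This singles out $\Delta_2^1$ (and, by the $x\mapsto 1-x$ symmetry, $\Delta_{n-1}^n$) as the candidate binding constraint.

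I would then evaluate this constraint. Moving to $x_1-\epsilon$ makes Player $2$ the left-most player and yields the pocket $[0,x_1]$ at cost $c(r_2-x_1)^2$, so the deviation payoff tends to $\frac{1}{n+1}+\frac{1}{4c}-c\bigl(\frac{1}{n+1}-\frac{1}{4c}\bigr)^2$. Subtracting from $u_2$ and simplifying gives $\Delta_2^1=\frac{c}{(n+1)^2}-\frac{1}{2(n+1)}-\frac{5}{16c}$, and multiplying by $16c(n+1)^2>0$ turns $\Delta_2^1\ge 0$ into $16c^2-8(n+1)c-5(n+1)^2\ge 0$. The discriminant of this quadratic is $384(n+1)^2=(8\sqrt6\,(n+1))^2$, so its positive root is $\frac{(1+\sqrt6)(n+1)}{4}$, and $\Delta_2^1\ge 0$ holds exactly when $c\ge\frac{1+\sqrt6}{4}(n+1)$, the asserted threshold.

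The main work---and the real obstacle---is to confirm that $\Delta_2^1$ and $\Delta_{n-1}^n$ are genuinely the binding constraints, i.e. that $c\ge\frac{1+\sqrt6}{4}(n+1)$ forces every other $\Delta_i^j\ge 0$. I would organize the remaining deviations into three families. For left-pocket grabs $\Delta_i^1$ with $i\ge 3$, the captured resource is again $x_1$ but the cost $c(r_i-x_1)^2$ is strictly larger than $c(r_2-x_1)^2$ (as $x_1<r_2<r_i$) while $u_i\ge u_2$, so $\Delta_i^1>\Delta_2^1\ge 0$; the right-pocket grabs are symmetric. For deviations into an interior gap, the captured resource never exceeds a full interior gap $\frac{1}{n+1}$, hence never exceeds the current payoff by more than an amount that the strictly positive relocation cost dominates once $c>\frac{n+1}{4}$; a short case check (adjacent versus distant targets, carefully accounting for the vacated position, which can merge two gaps) bounds each such $\Delta_i^j$ below by a positive multiple of $\frac{1}{n+1}$. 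Finally, the two peripheral players face a prohibitive cost of order $c\bigl(\frac{n-1}{n+1}\bigr)^2$ to reach the opposite pocket and only marginal gains from any interior gap, so their constraints are slack. Assembling the three families shows condition (3) holds exactly on $\{c\ge\frac{1+\sqrt6}{4}(n+1)\}$, which, together with the low-cost regime treated above, completes the characterization.
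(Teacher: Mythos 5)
Your proposal is correct and follows essentially the same route as the paper: dispose of the regime $c\le\frac{n+1}{4}$ via condition (1) of Proposition \ref{prop:sufficient_general}, identify Player $2$'s deviation to the left of Player $1$ as the binding constraint (yielding the quadratic $16c^2-8(n+1)c-5(n+1)^2\ge 0$ and the threshold $\frac{1+\sqrt{6}}{4}(n+1)$), and then check that all remaining deviations are dominated. Your sufficiency verification is in fact more explicit than the paper's (notably the merged-gap subtlety for interior deviations), but the decomposition and all key computations coincide.
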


The proposition is proved in \ref{proof:uniform_general}. The unique equilibrium candidate is such that only players $1$ and $n$ are peripheral. The intuition for the result is the following: when $\textbf{x}^{\star}$ is played, the peripheral players are players who attract the more resources while Players $2$ and $n-1$ are players who attract the less resources. For this profile to be an equilibrium, we need $c$ to be large enough so that Players $2$ does not have a profitable deviation to deviate to the left of Player $1$. As $n$ becomes large, the distance between players decreases and the constraint on $c$ is stronger.

\subsection{The case where $n=4$}\label{subsec:4players}
For the sake of completeness, we here give the equilibrium configuration in the case of $n=4$ players.

\begin{proposition}\label{prop:4players}
Let $\textbf{r}=(r_1,r_2,r_3,r_4) \in [0,1]^4$. Suppose, without loss of generality, that $r_1 \leq r_2 \leq r_3 \leq r_4$. Then we have:
\begin{itemize}
\item If Player $1$ is a far-left player and Player $4$ is a far-right player then the unique possible equilibrium is $\textbf{x}^{\star}=(r_1+\delta,r_2,r_3,r_4-\delta)$. 
\item If Player $1$ is not a far-left player but Player $4$ is a far-right player, the unique possible equilibrium is $\textbf{x}^{\star}=(\frac{r_3}{3},\frac{r_3}{3},r_3,r_4-\delta)$. 
\item Symmetrically, if Player $1$ is a far-left player but Player $4$ is not a far-right player, the unique possible equilibrium is $\textbf{x}^{\star}=(r_1+\delta,r_2,\frac{r_2+2}{3},\frac{r_2+2}{3})$. 
\item If neither Player $1$ nor Player $4$ are far left or far-right, there is no equilibrium.
\end{itemize}
The necessary and sufficient conditions given in Proposition \ref{prop:sufficient_general}, for the equilibrium candidate to be an equilibrium, also apply for $n=4$.
\end{proposition}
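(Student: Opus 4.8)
The plan is to run the four-player analysis on the same two rails as the general case: the candidate-uniqueness argument of Proposition \ref{thm:equilibrium_description} and the necessary-and-sufficient conditions of Proposition \ref{prop:sufficient_general}. The only genuinely new feature for $n=4$ is that all four players can be peripheral simultaneously, a configuration that Claim 1 of Lemma \ref{prop:3claims} rules out as soon as $n\ge 5$; this is the one place where the general proof does not transfer and where the real work lies.

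First I would reproduce the local-deviation argument verbatim from the proof of Proposition \ref{thm:equilibrium_description}. A player with both a left and a right neighbor attracts a quantity of resources equal to half the distance between those neighbors, which is independent of his own exact position; hence any such interior player sits at his reference, $x_i^\star=r_i$. A single left-peripheral player faces constant marginal gain $\tfrac12$ against an increasing marginal cost, so he deviates by exactly $\delta$ and locates at $r_1+\delta$ (symmetrically $r_4-\delta$ on the right), while two paired peripheral players must sit at the Claim-2 balance point, one third of the way to their common neighbor. This is exactly where the dichotomy of Definition \ref{def:periph_far_players} enters: if Player $1$ is far-left the pairing argument never triggers, whereas if he is not far-left he cannot be isolated (an isolated left-peripheral player requires $r_1+\delta<r_2$), so Players $1$ and $2$ must pair.

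Combining the two ends yields the first three bullet points. When both extreme players are far the candidate is $(r_1+\delta,r_2,r_3,r_4-\delta)$ with interior middle players; when exactly one end is far the non-far end pairs against an interior neighbor sitting at its reference, giving $(\tfrac{r_3}{3},\tfrac{r_3}{3},r_3,r_4-\delta)$ and its mirror image. In each of these three cases the candidate is then tested against Proposition \ref{prop:sufficient_general}: conditions $(1)$–$(2)$ keep a paired peripheral player from being dragged past distance $\delta$ from his reference or placed to the left of it, and the finite list of inequalities $\Delta_i^j\ge 0$ in condition $(3)$ eliminates the remaining cross-neighborhood jumps. Since the proofs of those two propositions invoke $n\ge 5$ only through Claim 1 of Lemma \ref{prop:3claims}, they carry over unchanged, which is the content of the final sentence of the statement.

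The main obstacle is the fourth bullet, where neither end is far. Here the reasoning above forces all four players to be peripheral, split into a left pair and a right pair with no interior anchor, so each pair's balance point is fixed by the other pair rather than by a reference. Writing $x$ and $y$ for the two pair locations, the Claim-2 balance equations become $x=\tfrac{y}{3}$ and $y=\tfrac{2+x}{3}$, whose unique solution is $(x,y)=(\tfrac14,\tfrac34)$. The plan is to feed the rigid candidate $(\tfrac14,\tfrac14,\tfrac34,\tfrac34)$ into the feasibility conditions $(1)$–$(2)$, which here read $r_2\le\tfrac14\le r_1+\delta$ and $r_3\ge\tfrac34\ge r_4-\delta$, and to combine them with the not-far inequalities $r_2-r_1\le\delta$ and $r_4-r_3\le\delta$. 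The delicate point is to decide whether these constraints are outright contradictory or whether non-existence must instead be extracted from a global deviation of condition $(3)$; establishing that this symmetric two-pair configuration can never survive all the deviation checks is precisely the step that the $n\ge 5$ argument does not provide, and I expect it to be the crux of the proof.
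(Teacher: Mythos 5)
Your treatment of the first three bullets is sound and matches what the paper intends: the paper itself writes no proof for $n=4$ (it only remarks that the result "relies on a case-by-case analysis using the same arguments as the general case of $n \geq 5$ players"), and your transfer of the arguments of Propositions \ref{thm:equilibrium_description} and \ref{prop:sufficient_general} to those three cases is exactly that analysis. You also correctly isolate the one genuinely new configuration --- the all-peripheral, two-pair profile --- and correctly pin down its unique candidate via Claim 2 of Lemma \ref{prop:3claims}: $x=\tfrac{y}{3}$, $y=\tfrac{2+x}{3}$, hence $(x,y)=(\tfrac14,\tfrac34)$.

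The gap is that you stop exactly at the step that decides the fourth bullet: you never show that the candidate $(\tfrac14,\tfrac14,\tfrac34,\tfrac34)$ fails the deviation checks, you only "expect" this to be the crux. And in fact that step cannot be carried out, because the two-pair profile \emph{can} survive every deviation check. Take $\textbf{r}=(\tfrac14,\tfrac14,\tfrac34,\tfrac34)$ and any $c>0$: Player $1$ is not far-left and Player $4$ is not far-right (since $r_1+\delta>r_2$ and $r_4-\delta<r_3$), so we are in the fourth bullet; yet $\textbf{x}=(\tfrac14,\tfrac14,\tfrac34,\tfrac34)$ gives every player payoff $\tfrac14$ at zero cost, any deviation inside $(\tfrac14,\tfrac34)$ attracts the same mass $\tfrac14$ at strictly positive cost, and any deviation outside $[\tfrac14,\tfrac34]$ or onto the opposite pair attracts strictly less than $\tfrac14$ at positive cost. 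Hence $\textbf{x}$ is a Nash equilibrium --- the classical costless four-player equilibrium, here reinforced by the reference costs. Your own feasibility system ($r_2\le\tfrac14\le r_1+\delta$, $r_4-\delta\le\tfrac34\le r_3$, the not-far inequalities, and the $\Delta_i^j\ge 0$ checks) is satisfiable, not contradictory, and this is what a correct fourth bullet must record: the unique possible equilibrium is $(\tfrac14,\tfrac14,\tfrac34,\tfrac34)$, which is an equilibrium under conditions analogous to Proposition \ref{prop:sufficient_general}. Note this is also the only configuration in which no player selects $x_i=r_i$, which is presumably what the paper's closing remark ("it might be the case that no player select $x_i=r_i$") alludes to, in tension with its own fourth bullet. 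Had you pushed your plan one step further --- actually testing the constraints you wrote down --- you would have found this counterexample rather than a proof.
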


This proposition is given here for the sake of completeness. The results slightly differ as it might be the case that no player select $x_i=r_i$. However, its proof poses no conceptual difficulties and relies on a case-by-case analysis using the same arguments as the general case of $n \geq 5$ players.

\subsection{Algorithm}\label{se:algo}

Figures \ref{fi:duopoly}, \ref{fi:comparative_statics}, \ref{fi:proba_c}, \ref{fi:hetero}, \ref{fi:proba_c_triopoly}, \ref{fi:triopoly_one_is_fixed} or similar figures with different parameters can be reproduced using the algorithms described in this section and available at \href{https://github.com/FournierAMSE/Location-games-with-references}{https://github.com/FournierAMSE\\/Location-games-with-references}.

Each program is concerned with a particular case of $2$, $3$, $4$ or $n \geq 5$ players. Each program typically contains the following functions:

\begin{itemize}
\item \texttt{equilibrium(\textbf{r},c)}:  For the given reputation vector $\textbf{r}$ and $c>0$, this function verifies whether the unique equilibrium candidate characterized in Proposition~\ref{thm:equilibrium_description} (for the case $n \geq 5$) is an equilibrium. To do so, it checks whether the candidate satisfies each necessary and sufficient conditions provided in Proposition~\ref{prop:sufficient_general}.
\item \texttt{probability\_NE(nb\_draws,c)}: This function estimates the probability of existence of an equilibrium by performing several random draws of reputation profiles $\textbf{r}$ and computing the ratio of equilibrium to the number of draws (NE/nb\_draws).
\item \texttt{NE\_plot(nb\_draws,c)} : This function performs several random draws of $(r_1,r_2)$ and plots the couples for which there exists an equilibrium. When plotting these figures with $n>2$, we need to fix $n-2$ reputations to obtain a $2$-dimensional figure.
\end{itemize}

\section{Proofs}\label{se:proofs}

For simplicity, we use the vocabulary from the second motivating example presented in the introduction: resources are customers, players are sellers, the consumers buying to a given seller are its clientele. We first introduce some useful notations:

\begin{itemize}
\item For a profile $\textbf{x}=(x_1,\dots,x_n)$ of locations in $[0,1]^n$ and a point $y \in [0,1]$, we denote $\card(y,\textbf{x})$ the number of players who selected location $y$ when the profile $\textbf{x}$ is played. $\card(y,\textbf{x})=0$ when $y \notin \{ x_1,\dots,x_n\}$. If $\card(y,\textbf{x})=2$ we say that $2$ players are paired in $y$.

\item We denote by $q_{i}(\textbf{x})$ the quantity of customers attracted by Player $i$. These consumers are called his \textit{clientele}. 
With the notations introduced at the end of section \ref{se:model}, we have:
$$q_{i}(\textbf{x}):= \frac{x_{i}^r-x_{i}^{\ell}}{2 \card(x_i,\textbf{x})}$$
It will sometimes be useful to consider $q_i^{\ell}(x)$ and $q_i^{r}(x)$ the quantity of customers attracted by player $i$ from the left and from the right:
$$q_{i}^{\ell}(\textbf{x}):= \frac{x_{i}-x_{i}^{\ell}}{2 \card(x_i,\textbf{x})}$$
$$q_{i}^{r}(\textbf{x}):= \frac{x_{i}^r-x_{i}}{2 \card(x_i,\textbf{x})}$$
We obviously have the relation: $q_{i}(\textbf{x})=q_{i}^{\ell}(\textbf{x})+q_{i}^{r}(\textbf{x})$.
\end{itemize}

The proof section intensively use the following equilibrum properties.

\subsection{Some equilibrium properties}

In the following, $\gamma$ is a continuous, strictly increasing and strictly convex function such that $\gamma(0)=0$.
\begin{lemma}\label{prop:3claims}~~\\
Suppose that in the game with $n$ players, references $\textbf{r}$ and cost function $\gamma$, an equilibrium $\textbf{x}$ is played:
\begin{enumerate}
\item Players are either single or paired: for any player $i$ we have
$$Card(x_i,\textbf{x}) \leq 2$$ 
\item If two players are paired, the quantity of consumers buying to their locations from the left is equal to the quantity of consumers buying from the right: 
$$Card(x_i,\textbf{x}) = 2 \Rightarrow q_i^{\ell}(\textbf{x})=q_i^{r}(\textbf{x})$$ 
\item The quantity of consumers that a non-peripheral player attracts does not change when he deviates to a location between the same neighbors (also when he is paired). Formally: $y \mapsto q_i(y,\textbf{x}^{-i})$ is  constant with $y\in (x_i^{\ell},x_i^r)$ for a non-peripheral player $i$.
\item The quantity of consumers that a paired, left-peripheral and not right-peripheral player attracts does not change when he deviates to his right, as long as he doesn't go over his right neighbor: $y \mapsto q_i(y,\textbf{x}^{-i})$ is a constant function on the interval $[x_i,x_i^r)$. The same statement holds for a right-peripheral player deviating to his left.
\item A left-peripheral players which is not right-peripheral is located on the right of his reference: $r_i \leq x_i$ for $i$ left and not right-peripheral (resp. $r_j \geq x_j$ for $j$ right and not left-peripheral).
\item If the reference profile is sorted, then equilibrium is also sorted: $r_1\leq \dots \leq r_n \Rightarrow x_1 \leq \dots \leq x_n$.
\end{enumerate}
\end{lemma}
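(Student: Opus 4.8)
The plan is to deduce all six claims from two recurring ingredients. The geometric ingredient is that, because resources are uniform, an \emph{interior} occupant of a position $y$ with nearest neighbours $x_i^\ell<y<x_i^r$ receives the clientele $\tfrac{x_i^r-x_i^\ell}{2}$, which is independent of its own position $y$ between the neighbours; a \emph{peripheral} occupant additionally captures everything out to the boundary, so (on the left) it receives $\tfrac{y+x_i^r}{2}$, increasing at marginal rate $\tfrac12$ as $y$ moves inward. The economic ingredient is that an infinitesimal relocation changes the cost by $o(1)$ while it changes the clientele either by $0$ or by a fixed positive jump; hence at equilibrium every such jump must be nonpositive, and the six statements are read off from these jump inequalities. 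I would prove the claims in the listed order, each using the previous ones.

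For Claims 1 and 2 I would use the two one-sided deviations of a member of a group sharing a location $y$. Sliding marginally left makes the player the sole left-most of the group and yields the entire left half-gap $\tfrac{y-x_i^\ell}{2}$; sliding right yields the right half-gap $\tfrac{x_i^r-y}{2}$; both are independent of $\epsilon$ and sum to the group's whole clientele. If $k\ge 3$ players share $y$, each currently gets at most a third of that sum, so the larger half-gap is a strictly profitable deviation, giving Claim 1. If $k=2$, the two no-profit inequalities read $q_i^\ell\le q_i^r$ and $q_i^r\le q_i^\ell$, hence $q_i^\ell=q_i^r$, which is Claim 2 (the peripheral case only adjusts the two half-gaps by the boundary strip, with the same conclusion).

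Claims 3 and 4 are then the geometric ingredient combined with Claim 2. For a single non-peripheral $i$ the clientele is $\tfrac{x_i^r-x_i^\ell}{2}$ at every position of the open neighbourhood, hence constant; for a paired $i$, Claim 2 forces the partner to the midpoint of $(x_i^\ell,x_i^r)$, which makes the single-occupant clientele on either side of the partner equal to the shared value, so $y\mapsto q_i(y,\textbf{x}^{-i})$ is constant on the whole neighbourhood. Claim 4 is the same computation at the boundary: for a paired left-peripheral player Claim 2 pins the pair so that the clientele obtained by moving right and becoming single equals the shared value, giving a constant function on $[x_i,x_i^r)$. For Claim 5 I would note that a single left-peripheral player at $x_i$ has clientele $\tfrac{x_i+x_i^r}{2}$, strictly increasing in $x_i$; if $x_i<r_i$ a small rightward move raises resources and lowers cost, a profitable deviation, so $x_i\ge r_i$ (the paired case replaces the increasing clientele by the constancy of Claim 4, the move toward $r_i$ then reducing cost at constant clientele).

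The last claim is the delicate one. I would first reduce a violation of sortedness to two players occupying adjacent locations $x_s<x_t$ whose references are strictly reversed, $r_s>r_t$ (equal-reference players being interchangeable). The key step is to upgrade the earlier claims to exact pin-downs: by Claim 3 and the cost minimisation it forces, a single interior player sits exactly at his reference, $x=r$ (otherwise a local move toward $r$ raises the payoff at constant clientele), and by Claim 5 a peripheral player lies on the interior side of his reference. Substituting these into $x_s<x_t$ then forces $r_s\le r_t$ in every case --- two interior players give $r_s=x_s<x_t=r_t$, while a peripheral-interior or peripheral-peripheral pairing gives $r_s\le x_s<x_t\le r_t$ --- contradicting $r_s>r_t$. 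I expect the main obstacle to be precisely this step: making the reduction to an adjacent reversed pair rigorous and verifying the interior/peripheral/paired sub-cases, in particular ruling out interior pairs, which Claim 3 does by exhibiting a cost-reducing move at constant clientele unless the reference already coincides with the midpoint. The remaining claims are routine once the half-gap bookkeeping and the $o(1)$ cost estimate are in place.
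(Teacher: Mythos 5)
Your proposal is correct and takes essentially the same approach as the paper: Claims 1--2 via one-sided infinitesimal deviations whose clientele jump is bounded away from zero while the cost change is $o(1)$, Claims 3--4 via the constant-clientele computation combined with Claim 2, and Claim 5 by pairing the clientele monotonicity (or constancy, if paired) with the cost reduction. For Claim 6 you merely repackage the paper's argument: your pin-downs (a non-peripheral player sits exactly at his reference --- this is precisely the paper's Lemma \ref{le:non_periph_at_ref} --- and a peripheral player sits on the interior side of his reference, by Claim 5) are proved by the very deviations the paper exhibits in its case analysis on whether $x_i \leq r_j$ or $r_j < x_i$, and your chain $r_s \leq x_s < x_t \leq r_t$ then yields the contradiction directly; note that this chain already handles equal references, so your preliminary reduction to an adjacent, strictly reversed pair and the relabeling remark are superfluous.
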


\begin{proof} Here we choose to avoid formalism when possible as the results are intuitive.\\
(1) Suppose that three players select the same location. Each of them gets a third of the clientele. If one of these players makes an infinitesimal move to the left, he loses consumers from the right but does not share consumers from the right anymore. Either left of right consumers is larger than a third of the total clientele, so either an infinitesimal deviation to the left or to the right improves the quantity of attracted consumers. Because the mentioned deviations are infinitesimal and the cost function is continuous, we described a profitable deviation. The same argument applies with any number of players $n \geq 3$.\\
(2) Suppose that two players select the same location and that they attract more consumers from the left ($q^{\ell}>q^r$). If a player makes an infinitesimal move to the left, he looses the right consumers but does not share the left consumers anymore. Again, because $\gamma$ is continuous, we described a profitable deviation.\\
(3) The statement is obvious if the player is single, as his clientele only depends on the distance between his left and right neighbors, which is constant. In the case where the player is paired, he attracts half the consumers buying from the left and half the consumers buying from the right. At equilibrium these two quantities are equal (Claim 2 above). A deviation to the left would change his right neighbor but not the quantity of attracted consumers.\\
(4) Suppose that a player is paired, left-peripheral and not right-peripheral. 
Again, a deviation to the right would change his left neighbor but not the total quantity of attracted consumers. This statement does not hold for a deviation to the left where he attract less consumers.\\
(5) If $x_i < r_i$ for $i$ left and not right-peripheral player then player $i$ has a profitable deviation to his right. It also holds when $i$ is paired due to Claim 4 above.\\
(6) Suppose that there exists $i<j$ such that $x_j<x_i$. We have either $x_i \leq r_j$ or $r_j < x_i$. Suppose first that $x_i \leq r_j$ so that $x_j<x_i \leq r_j$. In this situation Player $j$ may be non-peripheral (paired or not) or left-peripheral (paired or not).\\
- If Player $j$ is non-peripheral (paired or not), Claim 3 provides a profitable deviation within the same neighborhood, towards his reference.\\
- If Player $j$ is left-peripheral and paired, Claim 4 provides a profitable deviation to his right.\\
- If Player $j$ is left-peripheral and not paired, a small deviation to his right increases his clientele and decreases his cost.\\
Suppose now that $r_j < x_i$, so that $r_i \leq r_j < x_i $ and $x_j < x_i$. In this situation Player $i$ may be non-peripheral (paired or not) or right-peripheral (paired or not).\\
- If Player $i$ is non-peripheral (paired or not), Claim 3 above gives us that he would have a profitable local deviation towards his reference.\\
- If Player $i$ is right-peripheral (paired or not), he has a profitable deviation towards his reference as it lowers his cost while increasing (or maintaining if paired) his quantity of attracted resources.
\end{proof}

An immediate consequence of the previous lemma is the following strong equilibrium property:

\begin{lemma}\label{le:non_periph_at_ref}
Suppose that in the game with $n$ players, references $\textbf{r}$ and cost function $\gamma$, an equilibrium $\textbf{x}$ is played: if Player $i$ is non-peripheral then $x_i=r_i$.
\end{lemma}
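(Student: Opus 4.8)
The plan is to derive Lemma~\ref{le:non_periph_at_ref} as a direct corollary of Claim~3 and Claim~5 of Lemma~\ref{prop:3claims}. The essential observation is that a non-peripheral player~$i$ has both a left neighbor and a right neighbor, so his neighborhood is a bounded open interval $(x_i^{\ell},x_i^r)$ strictly inside which his attracted quantity $q_i$ is constant (this is exactly Claim~3, which holds whether $i$ is single or paired). Since the clientele term is locally constant in his own position, the only term in his payoff $g_i(\mathbf{x},\mathbf{r})=q_i(\mathbf{x})-\gamma(|x_i-r_i|)$ that varies as he moves within $(x_i^{\ell},x_i^r)$ is the cost $\gamma(|x_i-r_i|)$, which is strictly increasing in the distance $|x_i-r_i|$. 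Hence his payoff is strictly maximized over the neighborhood at whatever point minimizes that distance.

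First I would fix an equilibrium $\mathbf{x}$ and a non-peripheral player~$i$, and consider the two candidate cases. If the reference $r_i$ lies inside the open neighborhood, that is $r_i\in(x_i^{\ell},x_i^r)$, then moving to $x_i=r_i$ is a feasible deviation that keeps $q_i$ unchanged (Claim~3) and reduces the cost to $\gamma(0)=0$; by the equilibrium property this deviation cannot be strictly profitable, which forces $x_i=r_i$ already. If instead $r_i\notin(x_i^{\ell},x_i^r)$, say $r_i\le x_i^{\ell}$ (the case $r_i\ge x_i^r$ is symmetric), then within the neighborhood the distance $|x_i-r_i|$ is strictly decreasing as $x_i$ decreases toward $x_i^{\ell}$; a small move of $i$ toward his reference therefore keeps $q_i$ fixed while strictly lowering the cost, contradicting the assumption that $\mathbf{x}$ is an equilibrium. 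So this second case cannot arise, and we are left with $x_i=r_i$.

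The only point requiring a little care is the behavior at the endpoints of the neighborhood and the paired case, but both are already handled by the earlier lemma. When $i$ is paired, Claim~3 explicitly asserts that $q_i$ is still constant across the neighborhood (because the balance $q_i^{\ell}=q_i^r$ from Claim~2 guarantees that shifting the shared location does not change his half-share), so the argument goes through verbatim. The endpoints $x_i^{\ell}$ and $x_i^r$ themselves are occupied by the neighbors and are not interior; since we only need an infinitesimal interior move toward $r_i$ to expose the profitable deviation, the strict convexity and strict monotonicity of $\gamma$ suffice and no boundary subtlety remains. Thus the statement follows with essentially no computation, the main (and only mild) obstacle being to invoke the correct claim in the paired subcase rather than re-proving the constancy of $q_i$.
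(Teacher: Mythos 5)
Your proof is correct and takes essentially the same approach as the paper's: both rest on Claim~3 of Lemma~\ref{prop:3claims} (constancy of $q_i$ on $(x_i^{\ell},x_i^r)$, including the paired case) together with the strict monotonicity of $\gamma$, so that $x_i\neq r_i$ would give a profitable deviation toward the reference. Your case split on whether $r_i$ lies inside the neighborhood is just a more detailed rendering of the paper's one-line contradiction, and your opening mention of Claim~5 is superfluous since it is never actually used.
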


\begin{proof}
Suppose that $x_i \neq r_i$ for a non-peripheral player $i$. Claim 3 of Lemma \ref{prop:3claims} tells that $q_i$ is constant in the interval $(x_i^{\ell},x_i^r)$ (also when $i$ is paired). Therefore, a deviation towards his reference within the same neighborhood is profitable as it reduces the costs.\\
\end{proof}

\subsection{Proof of Proposition \ref{prop:2players}.}\label{prof:prop2players}
Consider the duopoly competition with $\gamma(d)=cd^2$. We suppose, without loss of generality that $r_1\leq r_2$.\\ 

\underline{We first investigate the differentiated equilibrium.}
~~\\
Suppose that $r_1+\frac{1}{4c} < r_2-\frac{1}{4c}$ and that $\textbf{x}^{\star}=(x_1,x_2)$ is an equilibrium. Lemma \ref{prop:3claims} (claim 6) gives that $x_1\leq x_2$. We now show that $x_1 < x_2$. The same Lemma (claim 2) proves that if $x_1=x_2$ then $x_1=x_2=\frac{1}{2}$. Suppose ad absurdum that $\textbf{x}^{\star}=(\frac{1}{2},\frac{1}{2})$ is an equilibrium.

\begin{itemize}
\item If $r_1+\frac{1}{4c}<\frac{1}{2}$, consider Player 1 deviating towards $r_1<\frac{1}{2}$. The equilibrium condition gives that $g_1((\frac{1}{2} - \epsilon,\frac{1}{2}),\textbf{r})-g_1(\textbf{x}^{\star},\textbf{r}) = 2c\epsilon(\frac{1}{2}-r_1)-c\epsilon^{2}-\frac{\epsilon}{2} \leq 0$. Dividing by $\epsilon>0$ and taking $\epsilon \rightarrow 0$ leads to $\frac{1}{2}\leq\frac{1}{4c}+r_1$ which is in contradiction with our hypothesis.

\item Symmetrically, if $\frac{1}{2}\leq r_1+\frac{1}{4c}$, consider Player 2 deviating towards $r_2$. We have: $g_2((\frac{1}{2},\frac{1}{2}+\epsilon),\textbf{r})-g_2(\textbf{x}^{\star},\textbf{r}) = 2c\epsilon(r_2-\frac{1}{2})-c\epsilon^{2}-\frac{\epsilon}{2} \leq 0$. Dividing by $\epsilon>0$ and taking $\epsilon \rightarrow 0$ leads to $r_2-\frac{1}{4c}\leq \frac{1}{2}$. Because $r_1+\frac{1}{4c} < r_2-\frac{1}{4c}$, the previous inequality contradicts with our hypothesis.
\end{itemize}

We proved that $x_1< x_2$. 
In this case, the payoff functions are given by $g_1(x_1,x_2)= \frac{x_1+x_2}{2}-c(x_1-r_1)^2$ and $g_2(x_1,x_2)= 1-\frac{x_1+x_2}{2}-c(r_2-x_2)^2$. The first order conditions give the optimal location for each player since their payoff functions are concave. 
We obtain $x_1=r_1+\frac{1}{4c}$ and $x_2=r_2-\frac{1}{4c}$.
We have proved that if $r_1+\frac{1}{4c} < r_2-\frac{1}{4c}$ then $(r_1+\frac{1}{4c},r_2-\frac{1}{4c})$ is the unique equilibrium candidate.\\

To conclude that this profile is indeed an equilibrium we have to check that players have no incentive to deviate on the other side of their opponent (which would change the expression of the payoff function as $x_1>x_2$). We now prove that such deviations are not profitable when inequalities $(E1)$ and $(E2)$ are satisfied. 

The first of the two inequalities below translates the fact to Player 1 has no incentive to deviate to location $x_2+\epsilon$ as $\epsilon$ goes to zero positively (this criteria is both necessary and sufficient since his payoff would decrease with $\epsilon>0$). Symmetrically, the second inequality translates the fact to player 2 has no incentive to deviate to location $x_1-\epsilon$ as $\epsilon$ goes to zero positively.
$$g_1(\textbf{x}^{\star},\textbf{r}) = \frac{r_1+r_2}{2}-\frac{1}{16c} \geq 1-(r_2-\frac{1}{4c})-c(r_2-\frac{1}{4c}-r_1)^{2}$$
and 
$$g_2(\textbf{x}^{\star},\textbf{r}) = 1-\frac{r_1+r_2}{2}-\frac{1}{16c} \geq (r_1+\frac{1}{4c})-c(r_2-\frac{1}{4c}-r_1)^{2}$$
Those two inequalities simplify to:
$$c(r_2-r_1)^{2}+r_1+r_2-1-\frac{1}{4c}\geq 0$$
$$c(r_2-r_1)^{2}-(r_1+r_2)+1-\frac{1}{4c}\geq 0$$ 
which are $(E1)$ and $(E2)$ in Proposition \ref{prop:2players}.\\
Note that we also proved that player 1 has no incentive to deviate to $x_1=x_2$. Indeed, we have that $u_2(x_2,x_2)=\frac{1}{2}-c(x_2-r_2)^2= \displaystyle \lim_{\epsilon \downarrow 0} \frac{u_2(x_2-\epsilon,x_2)+u_2(x_2+\epsilon,x_2)}{2}$ and both terms in the numerator are smaller than $u_2(x_1,x_2)$ as we prove that is it not profitable for Player 1 to deviate to $x_2-\epsilon$ nor to $x_2+\epsilon$ for any $\epsilon>0$.\\
~~\\
\underline{We now investigate the undifferentiated equilibrium.}
~~\\

Suppose now that $r_2-\frac{1}{4c} \leq r_1+\frac{1}{4c}$ and that $\textbf{x}^{\star}=(x_1,x_2)$ is an equilibrium.

To prove that $(x_1,x_2)=(\frac12,\frac12)$, it is sufficient to show that $x_1=x_2$ (Lemma \ref{prop:3claims}, claim 2).

Suppose, ad absurdum, that $x_1<x_2$. In this case the payoffs are given by $g_1(x_1,x_2)= \frac{x_1+x_2}{2}-c(x_1-r_1)^2$ and $g_2(x_1,x_2)= 1-\frac{x_1+x_2}{2}-c(r_2-x_2)^2$.
The equilibrium conditions give that $ \frac{\partial g_1(x_1+\epsilon,x_2)}{\partial \epsilon} <0$ for $\epsilon$ small enough, which simplifies to $\frac{1}{4c}+r_1\leq x_1$ when $\epsilon \rightarrow 0$. It also gives that $ \frac{\partial g_2(x_1,x_2-\epsilon)}{\partial \epsilon} <0 $ for $\epsilon$ small enough which simplifies to $x_2 \leq r_2-\frac{1}{4c}$ when $\epsilon \rightarrow 0$. We would then have $\frac{1}{4c}+r_1\leq x_1<x_2 \leq r_2-\frac{1}{4c}$, which contradicts our hypothesis.\\
~~\\
We proved that if $r_2-\frac{1}{4c} \leq r_1+\frac{1}{4c}$, the unique equilibrium candidate is $(\frac12,\frac12)$. 
It remains to prove that if $r_2-\frac{1}{4c}\leq\frac{1}{2}\leq r_1+\frac{1}{4c}$ then $(\frac12,\frac12)$ is an equilibrium.\\

Suppose that $r_2-\frac{1}{4c}\leq \frac{1}{2} \leq r_1+\frac{1}{4c}$. Suppose first that $r_1 \leq \frac{1}{2}$.\\
First, notice that player 1 has no profitable deviations in the interval $(\frac{1}{2},1]$ as it decreases his clientele and increases his costs. Furthermore, any deviations in $[r_1,\frac{1}{2}]$ would give Player $1$ a payoff $g_1((\frac{1}{2}-\epsilon,\frac{1}{2}),\textbf{r})=g_1(\textbf{x},\textbf{r})+2c\epsilon(\frac{1}{2}-r_1)-c\epsilon^{2}-\frac{\epsilon}{2}$ (with $0<\epsilon<\frac{1}{2}-r_1$). Because $\frac{1}{2}-\frac{1}{4c} \leq r_1$, we have $2c\epsilon(\frac{1}{2}-r_1)-c\epsilon^{2}-\frac{\epsilon}{2}\leq 0$ and the deviation is not profitable. Finally, any deviation in the interval $[0,r_1)$ gives a payoff strictly smaller than a deviation in $r_1$ which is not profitable. The same arguments hold for $r_1>\frac{1}{2}$ and for Player $2$ so we proved that $(\frac12,\frac12)$ is an equilibrium.

\subsection{Proof of Proposition \ref{prop:theta}}\label{proof:prop_theta}



We compute the area of the set $\mathcal{E}=\{ \textbf{r}\in[0,1]^{2} \text{ for which the duopoly competition admits an equilibrium}\}$. The computation is made easier after a rotation of the axis: $(x,y)=\left(\frac{\sqrt{2}}{2}(r_1-r_2),\frac{\sqrt{2}}{2}(r_1+r_2)\right)$ and when we consider the non-equilibrium area noted $\mathcal{N}$. By symmetry, we only compute $\mathcal{B}$, the bottom part of $\mathcal{N}$, below the line $y=\frac{\sqrt{2}}{2}$. We deduce $\mathcal{E}=1-\mathcal{N}=1-2\mathcal{B}$.

\begin{figure}[H]
\centering
\begin{tikzpicture}[>=latex]

\begin{scope}[xshift=0cm,yshift=0cm,rotate=45]

\draw[line width=1pt,->] (0,0) --(0,5);
\draw[line width=1pt,->] (0,0) --(5,0);
\draw[line width=0.4pt] (5,5) --(0,5);
\draw[line width=0.4pt] (5,0) --(5,5);

\draw[] (0,5) --(5,0);

\begin{scope}[xshift=1.87cm,yshift=1.87cm,rotate=135]

\draw[line width=1pt, smooth,samples=100,domain=-1.41:-0.69] plot(\x,1.41*\x*\x-1.56);
\draw[line width=1pt, smooth,samples=100,domain=0.69:1.41] plot(\x,1.41*\x*\x-1.56);

\end{scope}

\begin{scope}[xshift=3.11cm,yshift=3.11cm,rotate=315]

\draw[line width=1pt, smooth,samples=100,domain=-1.41:-0.69] plot(\x,1.41*\x*\x-1.56);
\draw[line width=1pt, smooth,samples=100,domain=0.69:1.41] plot(\x,1.41*\x*\x-1.56);

\end{scope}

\node[] at (2.5,2.5) {$|$};
\node[] at (2.5,2.5) {$-$};
\node[below=8pt] at (0,0) {$0$};
\node[below=3pt,rotate=45] at (5,-0) {$r_1$};
\node[below=3pt,rotate=45] at (-0.3,5) {$r_2$};

\draw[] (2.25,0) --(2.25,0.48);
\draw[] (2.5,0) --(2.5,0.95);
\draw[] (2.75,0) --(2.75,1.44);
\draw[] (3,0) --(3,2);
\draw[] (3.25,0) --(3.25,2.125);
\draw[] (3.5,0) --(3.5,2.21);
\draw[] (3.75,0) --(3.75,2.29);
\draw[] (4,0) --(4,2.41);
\draw[] (4.25,0) --(4.25,2.55);
\draw[] (4.5,0) --(4.5,2.69);
\draw[] (4.75,0) --(4.75,2.86);

\draw[] (0,2.25) --(0.48,2.25);
\draw[] (0,2.5) --(0.95,2.5);
\draw[] (0,2.75) --(1.44,2.75);
\draw[] (0,3) --(2,3);
\draw[] (0,3.25) --(2.125,3.25);
\draw[] (0,3.5) --(2.21,3.5);
\draw[] (0,3.75) --(2.29,3.75);
\draw[] (0,4) --(2.41,4);
\draw[] (0,4.25) --(2.55,4.25);
\draw[] (0,4.5) --(2.69,4.5);
\draw[] (0,4.75) --(2.86,4.75);

\draw[] (2,2) --(3,2);
\draw[] (2,3) --(3,3);
\draw[] (2,2) --(2,3);
\draw[] (3,2) --(3,3);

\draw[] (2.5,2) --(2.5,3);
\draw[] (2.12,2) --(2.12,3);
\draw[] (2.25,2) --(2.25,3);
\draw[] (2.375,2) --(2.375,3);
\draw[] (2.62,2) --(2.62,3);
\draw[] (2.75,2) --(2.75,3);
\draw[] (2.875,2) --(2.875,3);
\draw[] (2,2.5) --(3,2.5);
\draw[] (2,2.12) --(3,2.12);
\draw[] (2,2.25) --(3,2.25);
\draw[] (2,2.375) --(3,2.375);
\draw[] (2,2.62) --(3,2.62);
\draw[] (2,2.75) --(3,2.75);
\draw[] (2,2.875) --(3,2.875);
\end{scope}

\draw[line width=1pt,->] (0,0) --(0,8);
\draw[line width=1pt,->] (-5,0) --(5,0);

\node[below=3pt] at (5,-0.06) {$x$};
\node[below=3pt] at (+0.3,8.15) {$y$};

\draw[dashed] (0.71,0) --(0.71,3.55);
\draw[dashed] (-0.71,0) --(-0.71,3.55);

\draw[dashed] (1.41,0) --(1.41,1.33);
\draw[dashed] (-1.41,0) --(-1.41,1.33);

\node[below=3pt] at (0.71,-0.06) {$x_1$};
\node[below=3pt] at (1.41,-0.06) {$x_0$};

\node[below=3pt] at (-0.71,-0.06) {$-x_1$};
\node[below=3pt] at (-1.41,-0.06) {$-x_0$};

\node[below=3pt] at (0.35,2.35) {$\mathcal{B}$};

\node[below=3pt] at (8,2.25) {$\mathcal{B}$};

\begin{scope}[xshift=8cm,yshift=0cm,rotate=45]

\draw[line width=1pt] (0,0) --(0,1.98);
\draw[line width=1pt] (0,0) --(1.98,0);

\begin{scope}[xshift=1.87cm,yshift=1.87cm,rotate=135]

\draw[line width=1pt, smooth,samples=100,domain=-1.41:-0.69] plot(\x,1.41*\x*\x-1.56);
\draw[line width=1pt, smooth,samples=100,domain=0.69:1.41] plot(\x,1.41*\x*\x-1.56);

\draw[dotted, line width=1pt, smooth,samples=100,domain=-1.7:-1.41] plot(\x,1.41*\x*\x-1.56);
\draw[dotted, line width=1pt, smooth,samples=100,domain=1.41:1.7] plot(\x,1.41*\x*\x-1.56);

\draw[dotted, line width=1pt, smooth,samples=100,domain=-0.69:0.69] plot(\x,1.41*\x*\x-1.56);

\end{scope}

\draw[] (2,2) --(3,2);

\draw[] (2,2) --(2,3);

\end{scope}

\end{tikzpicture}
\caption{The equilibrium area $\mathcal{E}$ in hatched, the non-equilibrium area $\mathcal{E}$ in white. $\mathcal{B}$ is the bottom part of $\mathcal{N}$.}
\end{figure}
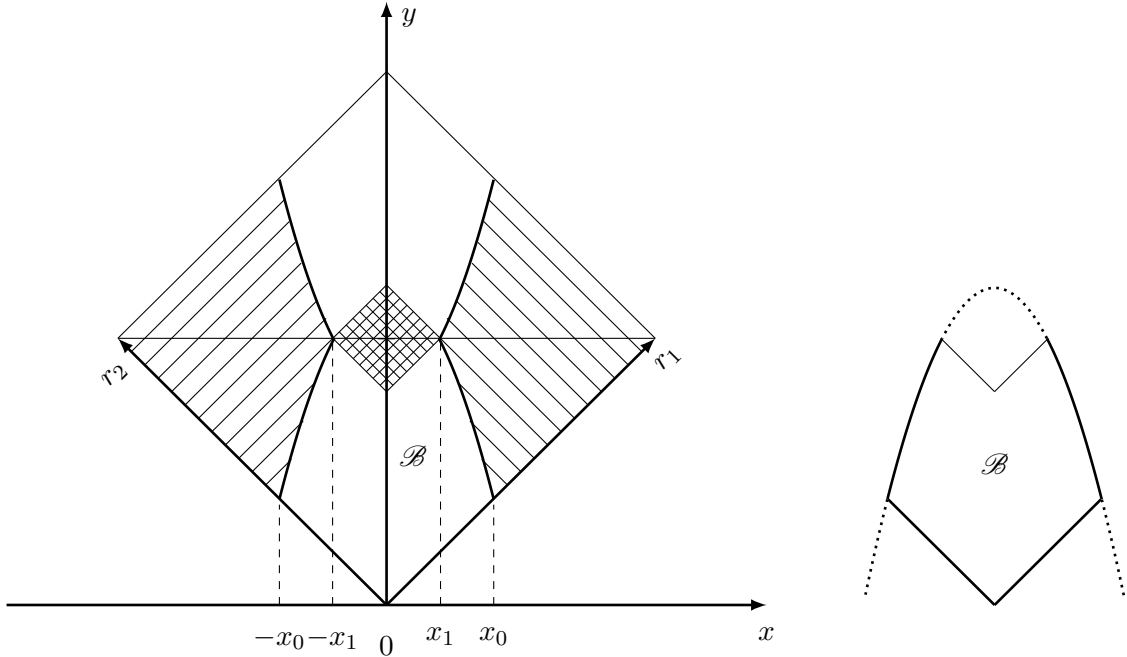

To compute $\mathcal{B}$, we compute the integral of the parabola between $-x_0$ and $x_0$ and we subtract the missing parts. 

\[ \mathcal{B} = \int_{-x_0}^{x_0} (-\sqrt{2}cx^{2}+\frac{\sqrt{2}}{2} +\frac{\sqrt{2}}{8c}) \,dx  - \Bigg(2\times\frac{x_0^{2}}{2}\Bigg) - \Bigg( \int_{-x_1}^{x_1} (-\sqrt{2}cx^{2}+\frac{\sqrt{2}}{2} +\frac{\sqrt{2}}{8c}) \,dx - 2\bigg(x_1(\frac{\sqrt{2}}{2}-\frac{\sqrt{2}}{4c})+\frac{x_1^{2}}{2}\bigg)   \Bigg)\]

which gives $\mathcal{A}=1-2\mathcal{B}=1-\frac{(2+4c)^{\frac{3}{2}}-6c-5}{6c^2}$. We finally have $\mathbb{P}(c)=\min\left(1,\mathcal{A}\right)=\min\left(1,1-\frac{(2+4c)^{\frac{3}{2}}-6c-5}{6c^2}\right)$.\\
~~\\
We now prove that $c \mapsto \mathbb{P}(c)$ is constant on $[0,\frac{1}{2}]$, decreasing on $[\frac{1}{2},\theta]$ and increasing on $[\theta,+\infty[$ for $\theta \simeq 1.1321$.

First, for $c\in (0,\frac{1}{2}]$, we obtain that $(\frac12,\frac12)$ is an equilibrium for every profile of reference so $\mathcal{P}(c)=1$. Furthermore, for $c>\frac12$, we find that $c \rightarrow \mathbb{P}(c)$ is a smooth function and that $\frac{d\mathbb{P}(c)}{dc}$ has the same sign as $f(c)=\sqrt{4c+2}(c+2)-(3c+5)$. We have $f'(c)=\sqrt{2}\sqrt{2c+1}+\frac{\sqrt{2}(c+2)}{\sqrt{2c+1}}-3$ and $f''(c)= \frac{3c\sqrt{2}}{(2c+1)^\frac{3}{2}}>0$. Because $f'$ is increasing and because $f'(\frac{1}{2})= \frac32>0$ we obtain that $f'$ is positive and therefore that $f$ is increasing.\\
~~\\
Because $f(\frac{1}{2})=-\frac32$ and that $\displaystyle\lim_{c\rightarrow+\infty} f(c)=+\infty$, the intermediate value theorem provides the existence of $\theta$ such that $f(x)\leq 0$ when $x \leq \theta$ and $f(x)\geq 0$ when $x \geq \theta$.\\

Solving $\frac{d\mathbb{P}(c)}{dc}=0$ gives a polynomial equation of degree $3$ that provides a unique real solution $\theta=\frac{1}{12} \left(3 \sqrt[3]{4 \sqrt{26}+73}+\sqrt[3]{1971-108 \sqrt{26}}-9\right) \simeq 1.321$.

\subsection{Proof of Proposition \ref{prop:2players_hetero}.}\label{proof:prop2players_hetero}

The proof is a straightforward generalization of Proposition \ref{prop:2players}: the quantity $\frac{1}{4c}$ that appears with homogeneous costs is replaced by either $\delta_1=\frac{1}{4c_1}$ or $\delta_2=\frac{1}{4c_2}$. When repeating the proof, there are two minor modifications:\\
(1) in the case where $r_2-\delta_2 \leq r_1+\delta_1$: in the homogeneous case, the necessary condition for $(\frac{1}{2},\frac{1}{2})$ to be an equilibrium is that $\frac{1}{2} \in [r_2-\delta,r_1+\delta]$ and this condition was sufficient to obtain that $\frac{1}{2} \in [r_1-\delta,r_1+\delta] \cap [r_2-\delta,r_2+\delta]$. This is not the case anymore with heterogeneous costs, so the necessary condition is now written with the two conditions $\frac{1}{2} \in [r_1-\delta_2,r_1+\delta_2]$ and $\frac{1}{2} \in [r_2-\delta_2,r_2+\delta_2]$.\\
(2) in the case where $r_1+\delta_1<r_2-\delta_2$: similarly to the homogeneous case, to insure that Player 1 has no incentive to deviate to the right of Player 2, and reciprocally that Player 2 has no incentive to deviate to the left of Player 1, we write the two conditions on the references $r_1,r_2$. Because the costs functions have changed, these conditions are now provided by inequalities $(E1'')$ and $(E2'')$.

\subsection{The triopoly}

\subsubsection{The equilibrium candidate.}\label{proof:candidate_triopoly}

Let $\textbf{r}=(r_1,r_2,r_3) \in [0,1]^{3}$ and $\gamma(d)=cd^2$. Without loss of generality we suppose that that $r_1 \leq r_2 \leq r_3$. Let $\textbf{x}^{*} = (x_1,x_2,x_3)$ be an equilibrium of the game. Remark that Claim 6 of Lemma \ref{prop:3claims} gives that $x_1\leq x_2 \leq x_3$.\\
~~\\
We start by providing a list of triopoly equilibrium properties and we then conclude on the equilibrium structure.\\
We show that if Player $1$ is not a far-left left player, he is necessarily paired with Player $2$ at equilibrium. Suppose ad absurdum that $r_1 + \delta \geq r_2$ and that $x_1<x_2$ ($\delta$ is defined in \ref{def:delta} and is equal to $\frac{1}{4c}$ in this context). Player $1$'s payoff is $g_1(\textbf{x}^{\star},\textbf{r})=\frac{x_1+x_2}{2}-c(x_1-r_1)^2$. Because this function is concave and because its derivative is null in $r_1+\delta$, it is then increasing on $[0, r_1 + \delta]$. We now prove that $x_2 \leq r_1+\delta$, which implies that Player $1$'s payoff is strictly increasing on $[0,x_2)$ and contradicts $x_1<x_2$. Suppose first that $x_2<x_3$, Lemma \ref{le:non_periph_at_ref} gives that $x_2=r_2$ so that $x_2 \leq r_1 + \delta$. Suppose now that $x_2=x_3$, both players are right-peripheral. We necessarily have $x_3=x_2\leq r_2$ (Lemma \ref{prop:3claims}, claim 5) and therefore $x_2 \leq r_1 + \delta$. We proved that if Player $1$ is not far-left he is necessarily paired with Player $2$. The same arguments hold for Player $3$ if he is not far-right.\\

A direct consequence is that if neither Player $1$ nor Player $3$ are far-left and far-right, there is no equilibrium as three players can not share the same location (Lemma \ref{prop:3claims}, claim 1).\\

We secondly show that, if a peripheral player is far-left, he is not paired. The equilibrium conditions imply that any deviation $\epsilon$ of Player $1$ towards his reference is not profitable, so that $g_1((x_1-\epsilon,x_2,x_3),\textbf{r})-g_1((x_1,x_2,x_3),\textbf{r})=-\frac{\epsilon}{2}+2c(x_1-r_1)\epsilon-c\epsilon^{2} \leq 0$. Dividing by $\epsilon>0$ and taking $\epsilon \rightarrow 0$ gives that that $x_1 \leq r_1 + \frac{1}{4c}$. Moreover, we have $r_1\leq r_2 \leq x_1=x_2$ (Claim 5 of the same Lemma) which gives a contradiction with our hypothesis that Player 1 is far left. The same arguments arise for the symmetric case of a far-right player.\\


Because Player $1$ is not paired when he is far-left, we show that $x_1=r_1+\delta$. Suppose that $r_1 + \delta < r_2$. In this situation Player $1$'s payoff is $g_1(\textbf{x}^{\star},\textbf{r})=\frac{x_1+x_2}{2}-c(x_1-r_1)^2$. The first and second order conditions gives us that his payoff is increasing with $x_1 \in [0, r_1 + \delta]$ and decreasing for $x_1 \in [r_1 + \delta,x_2)$. His payoff is maximized in $x_1 = r_1 + \delta$, which is the only position possible at equilibrium in $[0,x_2)$. The same arguments arise for the symmetric case of a far-right player.\\

In the equilibrium situation where $x_1<x_2<x_3$, Lemma \ref{le:non_periph_at_ref} gives that $x_2=r_2$. Finally, Claim 2 of Lemma \ref{prop:3claims} gives that if Players $1$ and $2$ are paired $x_1=x_2=\frac{x_3}{3}$.\\

We can now conclude on the structure of the triopoly equilibrium:
\begin{itemize}
\item First, if both Player $1$ and Player $3$ are far-left and far-right, then $x_1=r_1+\delta$, $x_2=r_2$, $x_3=r_3-\delta$. 
\item If Player $1$ is not far-left but Player $3$ is far-right, Player $1$ and $2$ are paired and Player $3$ is alone on $x_3=r_3-\delta$. Player $1$ and $2$ are positioned on $\frac{x_3}{3}=\frac{r_3-\delta}{3}$. The symmetric situations leads to $x_1=r_1+\delta$ and $x_2=x_3=\frac{r_1+\delta+2}{3}$.
\item If both Player $1$ and Player $3$ are not far-left nor far-right, then there is no equilibrium.
\end{itemize}

\subsubsection{Proof of Proposition \ref{prop:58noNE}}\label{proof:58noNE}

Suppose first that $c < \frac{5}{8}$, that is $\delta=\frac{1}{4c} > \frac{2}{5}$. We first prove that there exists no equilibrium with a far-left and a far-right players. Indeed, in this case: $x_1=r_1+\delta  > \frac{2}{5}$ and $x_3=r_3-\delta < \frac{3}{5}$. Therefore, Player $2$'s payoff is strictly less than $\frac{1}{10}$ and deviation to $x_1-\epsilon$ for $\epsilon$ small enough is profitable as Player $2$'s deviation payoff is at at least $\frac{2}{5}-c(\frac{1}{5})^2 > \frac{2}{5}-\frac{1}{30}>\frac{1}{5}$.\\ 

We now prove that there exists no equilibrium with a far-left player and no far-right player (the other case is symmetric). In this case, $x_1=r_1+\delta > \frac{2}{5}$. According to the equilibrium candidate described in \ref{proof:candidate_triopoly}, Player $2$ and $3$ share the location $\frac{r_1+\delta+2}{3}$. Player $2$'s payoff decreases with $r_1$ and is at most $\frac{1}{5}$. We now prove that Player $2$ has a profitable deviation to the left of Player $1$. In this location, Player $2$ attracts a quantity of consumers arbitrary close to $r_1+\delta \geq \delta > \frac25$. Because the cost function is convex the deviation is the least profitable when $r_1=0$ and $r_2=1$ and the deviation increases the costs by at most $c(\frac{3}{5})^2-c(\frac{1}{5})^2 < \frac{1}{5}$. When Player $2$ deviates to the left of Player $1$, his payoff change is strictly larger than $\frac{2}{5}-\frac{1}{5}-\frac{1}{5}=0$, and is therefore profitable. The exact same argument applies to Player $3$. We proved that there exists no equilibrium.\\

Suppose that $c \geq \frac{5}{8}$. There exits at most one equilibrium, as proved in the previous subsection where we exhibit the unique equilibrium candidate.\\

As long as $r_1<r_2<r_3$, there exists $c$ large enough so that Player $1$ and $3$ are far left and far right. The unique equilibrium candidate is therefore $(r_1+\frac{1}{4c},r_2,r_3-\frac{1}{4c})$. For $c$ large enough, the cost of any deviation for any player is strictly larger than $1$ which the maximal payoff a player can obtain in the game, so deviations are not profitable.

\subsection{Proof of Proposition \ref{prop:triopoly_symmetric}}\label{proof:triopoly_symmetric}

Consider the triopoly competition where $\textbf{r}=(r_1,\frac{1}{2},1-r_1)$ and $\gamma(d)=cd^2$.\\

If Player $1$ is not far-left then by symmetry Player $3$ is not far-right and Proposition \ref{proof:candidate_triopoly} gives that there is no equilibrium. At equilibrium, we must have Player $1$ and $3$ being respectively far-left and far-right. 

According to \ref{proof:candidate_triopoly}, the unique possible equilibrium is $(r_1+\delta,\frac12,1-r_1-\delta)$. We start by providing necessary conditions. Because Player $2$ has no profitable deviation to the left of Player $1$, we have:
$$\frac{1-r_1-\delta-r_1-\delta}{2} \geq r_1+\delta-c(\frac{1}{2}-r_1-\delta)^2$$ which simplifies to the second order equation $$c^{2}r_1^{2}-r_1(c^2+\frac{3}{2}c)+(\frac{c^2}{4}+\frac{c}{4}-\frac{7}{16})\geq 0$$  For the equilibrium candidate to be an equilibrium, $r_1$ should satisfy $r_1 \notin \left[\frac{3+2c-2\sqrt{2c+4}}{4c},\frac{3+2c+2\sqrt{2c+4}}{4c}\right]$. Because $\frac{3+2c+2\sqrt{2c+4}}{4c}>\frac12$ while $r_1<\frac{1}{2}$, we can simplify the condition to $r_1\leq \frac{3+2c-2\sqrt{2c+4}}{4c} =:\phi(c)$. \\

Now suppose that $\textbf{x}=(r_1+\frac{1}{4c},\frac{1}{2},1-r_1-\frac{1}{4c})$ and that $r_1\in [0,\phi(c)]$. We prove that $\textbf{x}$ is an equilibrium. 

First notice that players have no profitable deviation within their neighborhoods. Indeed, the first and second order conditions provide that the payoff is maximized when $x_1=r_1+\delta$ and $x_3=r_3-\delta$, and Claim $3$ Lemma \ref{prop:3claims} gives that Player $2$ prefers to locate at $r_2$ in this neighborhood. 

Moreover, players have no deviations to a different neighborhood: a deviation of Player $1$ to $x \in (\frac{1}{2},1]$ provides a smaller payoff than a deviation to the symmetric location $1-x \in [0,\frac12)$ which is not a profitable deviation, as it belongs to the neighborhood as $x_1$. It is not profitable to deviate to $x_2$ neither as such a deviation is less profitable than a deviation to $x_2-\epsilon$ for $\epsilon$ small enough, which is not profitable. Following the above computations, Player $2$ has no incentive to deviate to the left of Player $1$ or to the right of $3$ as long as $r_1 \leq \phi(c)$. Finally, Player $2$ has no incentive to deviate to $x_1$ as neither $x_1-\epsilon$ nor $x_1+\epsilon$ are profitable deviations as $\epsilon \rightarrow 0$. The same arguments hold for $x_3$. 

\subsection{Proof of section \ref{se:general_case}}

\subsubsection{Proof of Proposition \ref{thm:equilibrium_description}}\label{proof:general}

Suppose that $\textbf{x}$ is an equilibrium in the game with references $\textbf{r}$ and suppose, without loss of generality, that $r_1\leq \dots \leq r_n$.\\

As a consequence of Lemma \ref{prop:3claims} (Claim 6), Players $1$ and $n$ are peripheral. Players $2$ and $n-1$ can be peripheral or non-peripheral. Because at most $2$ players can share the same location at equilibrium (Claim 1, Proposition \ref{prop:3claims}), players $3$, $4$, \dots, $n-2$ are non-peripheral. For these players, Lemma \ref{le:non_periph_at_ref} claims that $x_i=r_i$.\\

Suppose that Player $1$ is a far-left player, i.e. that $r_1 + \delta < r_2$. We prove that $x_1<x_2$. Suppose, ad absurdum, that $x_1=x_2$. Using Lemma \ref{prop:3claims}, we have $q_1^{\ell}(x)=q_1^r(x)$, and because their right neighbor is the non-peripheral player $3$ located at $x_3=r_3$, we have $x_1=x_2=\frac{r_3}{3}$. 

On the one hand, an infinitesimal deviation to his left is not profitable to Player $1$, so we have that $\gamma'(\frac{r_3}{3}-r_1)  \leq \frac{1}{2}$. On the other hand we have $r_1+\delta < r_2 \leq \frac{r_3}{3}$, where the second inequality comes from the fact that Player $2$ has no profitable infinitesimal deviation to his right. Because $\gamma'$ is strictly increasing, we have $\gamma'(\frac{r_3}{3}-r_1) > \gamma'(\delta)=\frac{1}{2}$ and we face a contradiction. Player $1$ is not paired with Player $2$.

Because Player $1$ is peripheral and single, his payoff is given by $u_i(\textbf{x})= \frac{x_1+x_2}{2}-\gamma(x_1-r_1)$ in the interval $[0,x_2)$ and the first and second order conditions give that his payoff is maximized in $x_1=r_1+\delta$. The argument is symmetric for Player $n$.\\

Suppose now that Player $1$ is not a far left-player, i.e. that $x_1+\delta \geq r_2$. If $x_1<x_2$, his payoff is given by $u_i(\textbf{x})= \frac{x_1+x_2}{2}-\gamma(x_1-r_1)$ and is therefore strictly increasing with $x_1 \in [0,x_2)$. We conclude that Players $1$ and $2$ are paired. This is only possible if $x_1=x_2=\frac{r_3}{3}$ (Lemma \ref{prop:3claims}, claim 2). The argument is symmetric for players $n-1$ and $n$.

\subsubsection{Proof of Proposition \ref{prop:sufficient_general}}
We first show that conditions $(1)$, $(2)$ and $(3)$ are necessary for the described profile to be an equilibrium.

Suppose that condition $(1)$ doesn't hold, i.e. Player $1$ is a far-left player but $\frac{r_3}{3} \notin [r_2,r_1+\delta]$.\\
- If $\frac{r_3}{3}<r_2$, then player $2$ has a profitable deviation to his right: the quantity that Player $2$ attracts is constant on the interval $[\frac{r_3}{3},r_3)$ (Claim 4, Proposition \ref{prop:3claims}) but his reference cost decreases on the interval $[\frac{r_3}{3},r_2]$.\\
- If $\frac{r_3}{3}>r_1+\delta$, then Player $1$ has a profitable deviation to his left: by making an infinitesimal move to his left, he obtains a marginal loss in the quantity of consumers which is equal to $\frac{1}{2}$ while the marginal diminution of his reference cost is $\gamma'(x_1-r_1)=\gamma'(\frac{r_3}{3}-r_1)>\gamma'(r_1+\delta-r_1)=\gamma'(\delta)=\frac{1}{2}$.\\
~~\\
The case where condition $(2)$ doesn't hold is symmetric.\\
~~\\
Suppose that condition $(3)$ doesn't hold, i.e. there exists $i$, $j$ such that $\Delta_i^j(\textbf{x}^*)<0$. Then, by definition of $\Delta_i^j(\textbf{x}^*)$, Player $i$ has a profitable deviation by playing $x_j + \epsilon$ for $\epsilon$ small enough if $i<j$, or by playing $x_j - \epsilon$ for $\epsilon$ small enough if $i>j$. \\
~~\\
We now prove that conditions $(1)$, $(2)$ and $(3)$ are sufficient. We suppose that $\textbf{x}^*$ is the equilibrium candidate described in Proposition \ref{thm:equilibrium_description}, that the three conditions hold and we prove that $x^*$ is an equilibrium. \\

Take any player $i\in \{ 3,\dots, n-2 \}$. He has no profitable deviation within his neighborhood $[x_i^{l},x_i^{r}]$ where his payoff is maximized at $x_i=r_i$.\\
We now prove that player $i$ has no profitable deviation towards a different neighborhood. Consider any player $j$ such that $x_j<x_i$. The deviation payoff of Player $i$ increases in the interval $(x_j^{\ell},x_j)$ as the quantity of attracted consumers is constant and the reference cost is decreasing. However, condition $(3)$ gives that $u_i(\textbf{x}) \geq \lim\limits_{\substack{\epsilon \to 0 \\ \epsilon>0}} u_i(x_j-\epsilon,\textbf{x}_{-i})$, which proves that there is no profitable deviation in the interval $(x_j^{\ell},x_j)$. Note that a deviation to $x_j$ is not profitable neither as it gives Player $i$ a payoff $u_i(x_j,x_{-i})= \lim\limits_{\substack{\epsilon \to 0 \\ \epsilon>0}}\frac{u_i(x_j+\epsilon,\textbf{x}_{-i})+u_i(x_j-\epsilon,\textbf{x}_{-i})}{2}$, both terms in the numerator being smaller than $u_i(\textbf{x})$ due to condition $(3)$. Because the same argument applies to every player $j$ such that $x_j<x_i$, we proved that Player $i$ has no incentive to deviate to any location in $[0,x_i^{\ell}]$. A symmetric argument applies with any player $j$ such that $x_j>x_i$ and we proved that Player $i$ has no profitable deviation towards a different neighborhood.\\

If Player $1$ is far-left and $\textbf{x}^*$ is played, Player $2$ is not peripheral so he has no profitable deviation.
Suppose now that Player $1$ is not far-left, so that Player $2$ is peripheral. We prove that he has no profitable deviation neither. Condition (1) gives $x_2=\frac{r_3}{3} \in [r_2,r_1+\delta]$.\\
- Player $2$ has no profitable deviations within $(x_2,x_3]$ due to Lemma \ref{prop:3claims}, Claim 4.\\
- Player $2$ has no profitable deviations within $[r_2,x_2)$ neither: for any deviation $x_2-\epsilon$ in this interval, we have $u_2(x_2-\epsilon,\textbf{x}_{-i})-u_2(\textbf{x}))=-\frac{\epsilon}{2}-c\epsilon^{2}+2c\epsilon(\frac{r_3}{3}-r_2)$. Yet, since $\frac{r_3}{3}\leq r_1+\delta \leq r_2+\delta$, we have $-\frac{1}{2}+2c(\frac{r_3}{3}-r_2)\leq 0$ and therefore $-\frac{\epsilon}{2}-c\epsilon^{2}+2c\epsilon(\frac{r_3}{3}-r_2)<0$. The deviation is not profitable.\\
- Player $2$ has no profitable deviations within $[0,r_2)$ as they are even less profitable than a deviation in $[r_2,x_2^{*})$.\\
We conclude that Player $2$ doesn't have any profitable deviation in his neighborhood $[0,x_3)$.\\
Moreover, the same argument than before applies for global deviation: condition (3) implies that any deviation in a different neighborhood is not profitable neither. The same arguments hold regarding Player $n-1$.\\
~~\\
If Player $1$ is far-left, he is the only peripheral player. He has no local deviations, as his payoff is maximum in $[0,x_2)$ for $x_1=r_1+\delta=x_1^{*}$, due to the first and second order conditions. Again, conditions $(3)$ implies that any deviation in a different neighborhood is not profitable.\\
If Player $1$ is far-left, he is paired with Player $2$. In this case, the same arguments used for Player $2$ above applies and Player $1$ has no profitable deviation. The same applies to Player $n$.

\subsection{Proof of Proposition \ref{prop:regular_n_players}}\label{proof:uniform_general}

We consider the case of $n\geq5$ players where $\gamma(d)=cd^2$ and $\bold{r}=(\frac{1}{n+1},\frac{2}{n+1},...,\frac{n}{n+1})$.\\

Suppose that $\textbf{x}$ is an equilibrium. Using Proposition \ref{thm:equilibrium_description}, we have $x_i=r_i=\frac{i}{n+1}$ for every $i\in\{3, \dots, n-2 \}$. If Player $1$ is not far-left, Proposition \ref{prop:sufficient_general} gives that $\frac{r_3}{3}=\frac{1}{n+1}\in[r_2,r_1+\delta]=[\frac{2}{n+1},\frac{1}{n+1}+\delta]$, which is not the case here. Therefore, there is no equilibrium when Player $1$ is not far-left, which is the case when $c < \frac{n+1}{4}$.\\
~~\\
The existence of an equilibrium implies $c \geq \frac{n+1}{4}$. In this case players $1$ and $n$ are respectively far-left and far-right. The unique equilibrium candidate is $\bold{x^{\star}}=(r_1+\delta,r_2,r_3,...,r_{n-1},r_{n}-\delta)$. A necessary condition for this profile to be an equilibrium is that Player $2$ does not profitably deviate on the left of Player $1$, that is:
$$r_1+\delta -c(r_2-r_1-\delta)^{2} \leq \frac{r_3-r_1-\delta}{2}$$
which simplifies to $c \geq \frac{1+\sqrt{6}}{4}(n+1)$. Note that this constraints is harder to satisfy than the constraint $c \geq \frac{n+1}{4}$.\\

We now prove that $\bold{x^{\star}}=(r_1+\delta,r_2,r_3,...,r_{n-1},r_{n}-\delta)$ is an equilibrium when $c \geq \frac{1+\sqrt{6}}{4}(n+1)$. Repeating standard arguments, no player has an incentive to deviate within his neighborhood. We now prove that Players have no profitable deviation to a different neighborhood.

Any player $i\in \{3,\dots,n-2\}$ has a payoff equal to $\frac{1}{n+1}$. Therefore, it is not profitable to deviate in the interval between two non-peripheral players. Moreover, because $c \geq \frac{1+\sqrt{6}}{4}(n+1)$, Player $2$ does not benefit from deviating to the t of Player $i$. Such a deviation is even less profitable for a player $i \in \{3,\dots,n-2\}$. The same argument applies for a deviation on the right of player $n$.
 
Players $2$ and $n-1$ have no incentives to relocate respectively on the left of Player $1$ or on the right of Player $n$ because $c \geq \frac{1+\sqrt{6}}{4}(n+1)$. Any other deviation is even less profitable as the interval are smaller and the reference costs are higher.

Players $1$ and $n$ have no incentives to deviate anywhere as any deviation both reduces their clientele and increases their references costs.

\bibliographystyle{plainnat}
\bibliography{bib_reputation}

\end{document}